\DeclarePairedDelimiter\abs{\lvert}{\rvert}
\algnewcommand{\LineComment}[1]{\State // #1}
\newcommand{\NOISE}{{\eta}} 
\newcommand{\EXPECTED}{{\mathop{\mathbb{E}}}}
\newcommand{\sign}{{\mathop{\mathrm{sign}}}}
\begin{document}

\title{Election Manipulation in Social Networks with Single-Peaked Agents}

\author{Vincenzo Auletta\inst{1} \and Francesco Carbone\inst{1} \and Diodato Ferraioli\inst{1}}

\authorrunning{V. Auletta et al.}
%
\institute{Università degli Studi di Salerno, Fisciano SA 84084, Italy\\
\email{\{auletta@,f.carbone41@studenti.,dferraioli@\}unisa.it}}
\maketitle              

\begin{abstract}
Several elections run in the last years have been characterized by attempts to manipulate the result of the election through the diffusion of fake or malicious news over social networks. This problem has been recognized as a critical issue for the robustness of our democracy. Analyzing and understanding how such manipulations may occur is crucial to the design of effective countermeasures to these practices.

Many studies have observed that, in general, to design an optimal manipulation is usually a computationally hard task. Nevertheless, literature on bribery in voting and election manipulation has frequently observed that most hardness results melt down when one focuses on the setting of (nearly) single-peaked agents, i.e., when each voter has a preferred candidate (usually, the one closer to her own belief) and preferences of remaining candidates are inversely proportional to the distance between the candidate position and the voter's belief. Unfortunately, no such analysis has been done for election manipulations run in social networks.

In this work, we try to close this gap: specifically, we consider a setting for election manipulation that naturally raises (nearly) single-peaked preferences, and we evaluate the complexity of election manipulation problem in this setting: while most of the hardness and approximation results still hold, we will show that single-peaked preferences allow to design simple, efficient and effective heuristics for election manipulation.
\end{abstract}

\section{Introduction}
Nowadays, online social networks
have
become a ubiquitous, fast, easily accessible source of information: e.g., Matsa and Shearer \cite{matsa_shearer2018} showed that about
one-fifth
of American adults
consults social media to read news.
Interestingly a significant part of the interviewed people declared that social media news somehow altered their opinion \cite{matsa_shearer2018}.
This makes social networks a powerful tool that can be exploited to manipulate people's minds about a particular theme, spreading targeted news to specific users.
Indeed, this spread of information has been apparently exploited in many
recent elections
\cite{alcott_gentzkow2017,ferrara2017,bruno2022brexit,giglietto2018mapping}. The most prominent example has been the 2016 U.S. election: in the campaign preceding this event, fake news spreading has been so relevant that some commentators argued that the election's outcome could be different if the campaign had been fair \cite{alcott_gentzkow2017}.


The relevance of the topic leads the AI community to investigate about the problem of manipulating elections by spreading information over social networks. Specifically, the problem has been modelled as follows: let $G=(V, E)$ be a graph representing the (online) social network of the voters, with $V$ being the set of voters and $E$ being the set of (possibly directed) social relationships between voters.
Each voter $v \in V$ has a political opinion that somehow implies particular preferences over the set $C$ of candidates that, in turn, imply a particular vote according to the voting rule that controls the election. The manipulator has a (possibly unlimited) budget $B$ to spend to hire some voters, bribe them, and make them act as influencers to spread some news in favour of or against a target candidate $c^* \in C$. As a result of such influence, some voters (depending on their influenceability and the effectiveness of the hired influencers) will update their opinions and change their votes in favour of or against the target candidate. The aim of the manipulator is to choose the best set of influencers (not violating the budget constraint) to optimize a specific objective function that encodes the chances of victory of the target candidate $c^*$.
Wilder and Vorobeychik \cite{wilder_2018} have been the first to deal with this problem. They indeed prove that it is hard to compute both the set of influencers that maximizes the probability of victory of $c^*$, and the one that optimizes the expected difference between the number of votes of $c^*$ and the number of votes of the best candidate different from $c^*$. However, for the latter problem there is a greedy algorithm that computes a constant approximation of the optimum \cite{wilder_2018}. These results have been extended to more complex settings, focusing, e.g., on different models of information diffusion, different voting rules, and different messages to spread \cite{coro_et_al_2019,uncertainty_paper,castiglioni_et_al}.

These works complement the large literature in AI and social choice about bribery in elections \cite{bartholdi1989,bartholdi1991,bartholdi1992,faliszewski2009_bribery}: they focus on ways of altering the outcome of an election by changing the preference of a few of voters. Anyway, all these works do not take into account the possibility that manipulators could use voters' social relationships to spread the manipulation. Most of the results in these works imply that it is computationally hard to compute the best way to alter an election. Still, most of these hardness results have been showed to melt down when the preferences of voters satisfy the realistic hypothesis of being \emph{single-peaked} or nearly single-peaked \cite{walsh_2007,faliszewski_et_al_2009,brandt_et_al_2010,faliszewski_hemaspaandra_2012}, where single-peakedness implies that candidates can be seen as ordered (e.g., along the political spectrum), voters have a preferred candidate (e.g., the one that is closer to their own political belief) and the preference towards remaining candidates decreases as the distance between their position and the one of the preferred candidate increases.

\paragraph{Our contribution.}
Election manipulation involving information spreading in social networks has not been explicitly studied for the setting in which preferences are single-peaked.
In this work we address this issue, by studying the problem of election manipulation through social influence in single-peaked scenarios. Specifically, we will build over known models of election manipulation in order to embed into them the principles of single-peakedness. Namely, in our model, each voter has an opinion on the topic of the voting and their ranking of alternatives depends on the distance between the candidates' positions and the voter's belief. Here, the diffusion of information has the effect to change the opinion of the voter, and hence it may alter her ranking, but still guaranteeing it to be single-peaked. This model can be also easily extended to encompass nearly single-peaked preferences: these, indeed, may simply arise from voters having a noisy view of candidates position. Given this model, the problem is to find, subject to a budget constraint, the set of ``seeds'' from which to start the information campaign that maximizes the margin of victory of the desired candidate.

It is not hard to check that previous hardness results extend also to this setting. Moreover, we show that there exists an approximation algorithm for the problem guaranteeing to return a set of seeds able to achieve at least a constant fraction of the margin of victory that would be achieved by selecting the optimal set of seeds whenever the target candidate is the one that receives the largest benefit from the campaign\footnote{For example, this may not occur when a message is spread in favour of an extremist party when there are few supporters of an half-extreme party and many supporters for a moderate party: the message causes many votes to move from the moderate towards the half-extreme party, while few votes are conquered by the target party.}. The proposed algorithm is based on a greedy approach, and it is built on Monte Carlo simulations in order to estimate the performance of a seed selection. Unfortunately
this algorithm, even if it guarantees a polynomial time complexity, turns out to be computationally expensive, even for very small instances of the problem.

This motivates the need to design more efficient algorithms, trying to speed up computations while preserving the effectiveness of the manipulation. To this aim, this work proposes and compares several fast heuristics to identify the best voters to influence the electorate; we experimentally show that the best of these heuristics is a variant of the standard PageRank.
We show that the performance of this heuristic overwhelms the one of the approximation algorithm,
improving execution times by a factor of (up to) 3000 on average.
And this improvement comes with a relatively small loss in terms of effectiveness.
%
Moreover, the proposed heuristics turn out to be robust against altered voters' views of candidates' positions generating only nearly single-peaked preferences: the performances of the heuristics clearly degrade with the amount of noise in the voter's view, but they are very close to the single-peaked case when this noise is limited.

\paragraph{Other Related Works.}
The problem of election manipulation over social networks has been only recently formalized
in
\cite{wilder_2018}. However, several works considered similar issues. E.g.,
\cite{sina2015adapting} studies a plurality voting scenario in which the voters can vote iteratively and shows how to modify the relationship among voters to make the desired candidate win an election.
\cite{auletta2015minority,auletta2017information,auletta2017robustness}
show that in some scenarios, when there are only two candidates, a manipulator controlling the order in which information is disclosed to voters can lead the minority to become a majority.
\cite{auletta2018reasoning} shows that a similar manipulator can lead a bare majority to consensus. These results do not extend to more than two candidates
\cite{AulettaFFG19,AulettaFG20}.
\cite{bredereck2017manipulating}
shows how this manipulator must select the seeds diffusing information in a two-candidate election.
\cite{faliszewski2022opinion} considers a similar issue, but its model does not directly embed the diffusion of information over networks.
Our model for election manipulation is also largely inspired by models of election manipulations under metric preferences \cite{anshelevich2018approximating,wu2022manipulating}.

\section{The Model}
\label{sec:model}
Consider an election with a set of \emph{voters} $V$ and a set of \emph{candidates} (or alternatives) $C=\{c_0, c_1, ..., c_{m-1}\}$. Let $c^* \in C$ be a special \emph{target} candidate such that we want to alter the election in her favour. We consider a \emph{plurality} voting rule: the voters cast a single vote for their preferred candidate (we assume that
they do not misreport the preferred candidate to alter the election outcome), and the winner of the election is the candidate receiving the largest number of votes.

A candidate $c$ is associated with a position $x_c$ (e.g., their position on the political spectrum). For simplicity, we assume henceforth, that positions are included in $[-1, 1]$.
%
Each voter $v$ also is associated with a position $x_v$ in $[-1, 1]$ reflecting her belief.
The preference of $v$ over candidates depends on her position $x_v$ and her \emph{view} of the candidates' positions. Indeed, we assume that voters may not have a clear picture of the political positions of the parties. For instance, a pure moderate party can be perceived as moderate-left by some voters and moderate-right by others.
To model this, we associate with each candidate $c$ a random variable $X_c$ that presumably depends on the true position $x_c$ of the candidate; the blurred view of each voter $v$ consists of a random realization $x_c^v$ of $X_c$. Note that the noisy positions of the candidates in the views are clipped in $[-1, +1]$ to ensure that they remain in the allowed range. Hence, the blurred position of candidate $c$ in the view of voter $v$ can be expressed as
$x_c^v = \left[ x_c + \NOISE(x_c) \right]^{+1}_{-1}$,
where $\NOISE(x_c)$ is the \emph{noise term} depending on the real position of the candidate, and $[\cdot]^{+1}_{-1}$ indicates the clip operation. We assume that $\NOISE(x_{c_i})$  and $\NOISE(x_{c_j})$ are independent, for any $c_i \neq c_j$. We below
consider
several different ways to generate the noise term.

The ranking of voter $v$ with respect to candidate $c$ is then defined with respect to the goal of minimizing the absolute value of the difference between $x_v$ and $x_c^v$: i.e., the most preferred is the one that minimizes $|x_v - x_c^v|$, the second most preferred one achieves the second smallest value of this function, and so on. It is immediate to check that, whenever the view of voters corresponds to real candidates positions, the preferences built in this way are \emph{single-peaked}, i.e., for each voter $v$ there is a preferred candidate $c$, and for each pair of candidate $c', c''$ such that $x_{c'} < x_{c''} \leq x_c$ ($x_{c'} > x_{c''} \geq x_c$), $c'$ is preferred less than $c''$ by $v$. It is easy to see that this method also allows to model nearly single-peaked voters: if the variance of the noise is high, the chances of swapping adjacent candidates on the political spectrum are high, too. Hence, the higher the noise, the higher is the number of swaps necessary to make the resulting ranking single-peaked (see also Figure~\ref{fig:noise_swaps}), that is a very common measure of distance from single-peakedness \cite{erdelyi_2017}.
Anyway, we stress that preferences of voter $v$ are always single-peaked according to her own view, even if they are not single-peaked according to the real position of candidates or other voters' views.
\begin{figure}[H]
\includegraphics[scale=0.6]{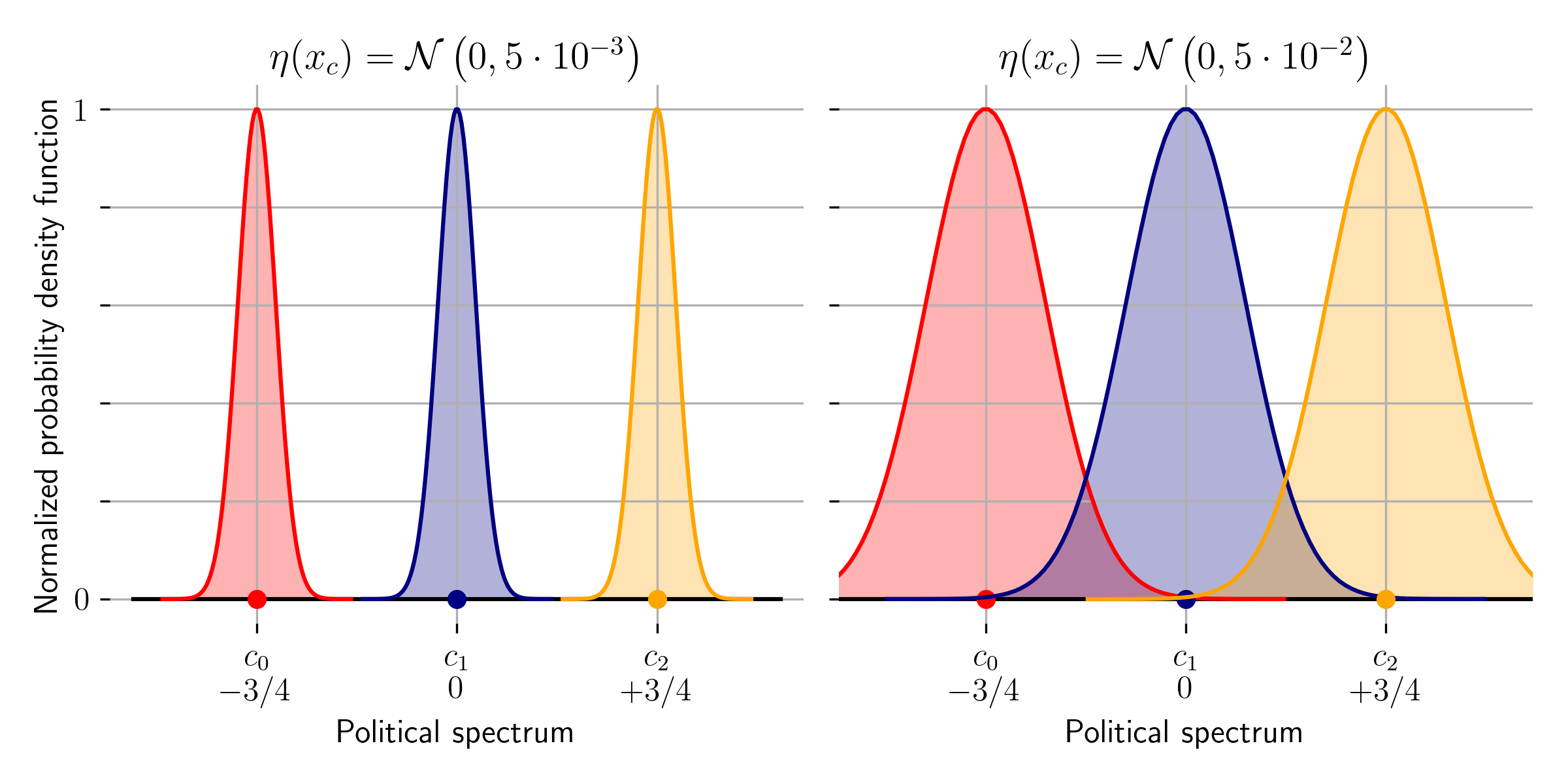}
\caption{An example of the positions of candidates in nearly-single-peaked electorates. The election involves 3 parties $\{c_0, c_1, c_2\}$. The left plot shows the distribution of candidates' positions when noise variance is low. It shows that it is unlikely to swap adjacent candidates in the left-to-right order; hence, preferences are likely to remain single-peaked with respect to the axis $[c_0, c_1, c_2]$. The right plot shows the effect of noise with high variance: the intersections of the distributions can potentially make candidates swap, creating non-single-peaked preferences. The higher the variance of the noise, the higher the chances of swapping adjacent candidates, and the higher the swap distance of the electorate from being perfectly single-peaked.}
\label{fig:noise_swaps}
\end{figure}

A manipulator
can spread information  supporting $c^*$ among voters. Formally, we suppose that voters are arranged on the nodes of a social \emph{network} $G=(V, E, p)$, where $E$ is the set of edges $(u, v)$ connecting voter $u$ to voter $v$, and $p(u,v) \in [0, 1]$ encodes the strength of this relationship, namely how probable is that the information that $u$ sends to $v$ affects the opinion of $v$. The manipulator is then supposed to select a subset $S$ of voters, of size not larger than a given \emph{budget} $B$, from which the information is sent. As most of the previous literature about election manipulation through social networks \cite{wilder_2018,uncertainty_paper,castiglioni_et_al} we assume that information spreads through the network according to the Independent Cascade Model \cite{kempe_et_al}: it starts with $S_0 = S$ and, at each time step $t$, if $S_{t-1}$ is not empty, each voter $u$ in $S_{t-1}$ sends the information to each neighbor $v$ that has not been yet affected, and this neighbor $v$ is affected, and hence inserted in $S_t$, with probability $p(u,v)$.
When a voter $v$ is affected by the news spread by the manipulator (i.e., $v$ belongs to $S_t$ for some $t \geq 0$), his belief is updated. Specifically, the voter's position is moved by a constant amount $\delta$ towards the position (in her view) of the target candidate. If the voter is closer than $\delta$ to the position of $c^*$, then she simply moves to $x_{c^*}^v$. Formally, the voter's new position $\hat{x}_v$ is
    $\hat{x}_v = x_v + \min(\delta, |x^v_{c^*} - x_v|) \cdot \sign(x^v_{c^*} - x_v)$.

As in previous literature \cite{wilder_2018,uncertainty_paper,coro_et_al_2019,castiglioni_et_al}, we assume that the goal of the manipulator is to choose the set of seed $S$ of size at most $B$ that maximizes the increment in the margin of victory of $c^*$. Specifically, the goal of the manipulator is to maximize the expected change of margin of victory
$\Delta MoV(S) =  |V^*_{c^*}| - \max_{c \neq c^*}|V^*_{c}| - \left(  |V_{c^*}| - \max_{c \neq c^*}|V_{c}|\right)$,
where, by $|V_c|$ and $|V_c^*|$, we mean, respectively, the number of votes for the candidate $c$ before and after the manipulation. Essentially
$\Delta MoV$ is the increase of the advantage of $c^*$ over its best opponent before and after the manipulation (that is guaranteed to be always non-negative).
%
Note that the manipulator knows exactly the real position of candidates and of the voters, but she does not know the voters'views.

It is not hard to see that by considering the special case of zero-noise, only two candidates and $\delta$ large enough to guarantee that the least preferred candidate becomes the most preferred candidate for each voter $v$ activated by the spread of information, our model reduces to the one considered
in \cite{wilder_2018}. Hence, the hardness result described there for the election manipulation problem clearly extends to our model. For this reason, in the rest of this work we only look for algorithms able to approximate the optimal choice of the manipulator. Specifically, we say that an algorithm is \emph{$\alpha$-approximate}, for $\alpha \leq 1$ if it always returns a set of seeds $S$ such that $\EXPECTED[\Delta MoV(S)] \geq \alpha \EXPECTED[\Delta MoV(S^*)]$, where $S^* = \arg\max_{S} \EXPECTED [\Delta MoV (S)] \colon |S| \leq B$ is the optimal seed.

In this work we will also consider an extension of previous models: we allow the manipulator to run a multi-round campaign, by choosing in each round the seeds from which the information spreads, and the electorate evolves accordingly.

We next introduce two tools that will turn out to be particularly useful in the design of our algorithms.
The first one consists in an algorithm for finding an approximation to the optimal solution for the \emph{Influence Maximization} problem: given a budget $B$, a network $G=(V, E)$ and a weight $w(v)$ for each vertex $w$, find the subset $S^* \subseteq V$ of size at most $B$ that maximizes the expected total weight of vertices in $A(S^*)$, that is the vertices affected by the information sent from $S^*$ and spread according to the Independent Cascade model, i.e., $S^* = \arg\max_{S} \EXPECTED\left[\sum_{v \in A(S)} w(v)\right]$. It is well-known (cf. \cite{kempe_et_al}) that a simple hill-climbing algorithm equipped with Monte Carlo simulations for the estimation of the expectation of random variables is able to return a set of seeds $S$ that provides a $\left(1 - \frac{1}{e} - \varepsilon\right)$-approximation of the expected influence of the optimal choice of seeds $S^*$.

The second useful tool is given by the \emph{PageRank} measure \cite{page_brin}. It is a measure of importance of nodes, that evaluates the importance of a node with respect to the importance of its neighbors. Specifically, the page rank $r(v)$ of a node $v$ in a graph $G=(V,E)$ is computed as $\frac{s}{|N_v|}\sum_{u \in N_v} r(u) + \frac{1-s}{|V|} \sum_{u \in V} r(u)$, where $N_v$ is the set of neighbors of $v$ in $G$, and $s$ is a factor in $[0, 1]$ weighting these two contributions. It is well-known that by repeatedly applying this update rule we will eventually converge to fixed point values (that is, values that do not change after a further application of the update rule), that we will take as the PageRank measure of these nodes.


\section{Approximation Algorithm}
\label{sec:greedy}
We here propose a greedy algorithm, that returns a constant approximation of the optimal solution to the Election Manipulation problem in the setting described above whenever the view of voters corresponds to the real position of candidates.
The design of the algorithm directly mimics the ones proposed
in \cite{wilder_2018,castiglioni_et_al}.
The crux of the algorithm is identifying the voters that, if influenced, will change their minds and vote for the target candidate $c^*$. Then, the problem is solved by simply computing the set of seeds that maximizes the weighted influence maximization \cite{kempe_et_al} for which weights 1 are assigned to such voters and 0 to all the other nodes. Indeed, nodes that already supports the target candidate can be useful for spreading information, but reaching them will not modify the margin of victory of the target candidate; similarly, influencing a voter that cannot be convinced to vote for the target candidate apparently cannot improve the margin of victory.
Unfortunately, influencing a voter that will never vote for $c^*$ after the manipulation is not that pointless: even if the number of votes of the target candidate does not increase, this voter may change her preferred candidate and erode votes for the best opponent of the target candidate, possibly increasing the margin of victory of $c^*$. So we need to prove that, even if this strategy is not accounted by our algorithm (i.e., it only focuses on influencing nodes that can be made to support the target candidate but do not actually do), we still achieve a constant approximation whenever the campaign for the target candidate does not advantage other candidates more than the target itself\footnote{In
\cite{wilder_2018}, it is not necessary to take into account this case since only two candidates are considered. In
\cite{castiglioni_et_al}, instead, this case is considered, but the algorithm is showed to provide a constant approximation on stronger assumptions than in our setting.}.

%
%
%
%
Since our approximation algorithm relies on the influence maximization algorithm discussed above, it is easy to see that its computational complexity is $O\left( B \cdot |V| \cdot MI \right)$, where $B$ is the budget, $|V|$ is the number of nodes in the graph, and $MI$ is the cost of estimating the marginal influence of a node via Monte Carlo simulations, that depends on the margin of error that one is willing to accept in the computed estimation. Proposition \ref{proposition_weightedinfluencemaximizationzeroone} proves that $MI$ also polynomially depends on the size of the input, allowing us to conclude that our algorithm is polynomial.

\begin{proposition}
\label{proposition_weightedinfluencemaximizationzeroone}
Let $P$ be a weighted influence maximization problem with the following properties:
\begin{itemize}
\item $n$ is the number of nodes of the graph;
\item for $N \leq n$ nodes $v_1, ..., v_N$, the weights are $w(v_1)=...=w(v_N)=1$;
\item for $n-N$ nodes $v_{N+1},...,v_n$ the weights are $w(v_{N+1})=...=w(v_n)=0$.
\end{itemize}
That is, $P$ is an influence maximization problem in which nodes $v_{N+1},...,v_n$ can be ignored in the final result.

Let $A$ be the set of seed nodes that start the diffusion process. Assume that $A$ contains at least one node $v$ such that $w(v)=1$. Let $\sigma_w(A)$ be the expected sum of weights of influenced nodes at the end of the diffusion dynamic (i.e., the expected number of infected nodes ignoring the nodes with weight 0).

If the diffusion process starting from $A$ is simulated independently at least
$\frac{N^2}{\epsilon^2} \cdot ln \left( \frac{1}{\lambda}\right)$ times, then the average number of influenced nodes with weight 1 over these simulations is a $(1 \pm \epsilon)$-approximation to $\sigma_w(A)$, with probability at least $1-2\lambda^2$.
\end{proposition}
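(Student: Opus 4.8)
The plan is to treat the $T := \frac{N^2}{\epsilon^2}\ln(1/\lambda)$ independent runs of the diffusion process as i.i.d.\ draws of a bounded random variable, apply a Chernoff--Hoeffding concentration inequality to their empirical mean, and then convert the resulting \emph{additive} deviation bound into the desired \emph{multiplicative} $(1\pm\epsilon)$ guarantee by exploiting the hypothesis that $A$ contains a weight-$1$ node.

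First I would fix notation: for the $i$-th simulation let $Y_i$ denote the number of influenced nodes of weight $1$. The variables $Y_1,\dots,Y_T$ are i.i.d., each supported on $\{1,2,\dots,N\}$ — the lower bound $Y_i \ge 1$ holds because the seeds in $A$ are activated with probability one and, by assumption, at least one of them has weight $1$ — and $\EXPECTED[Y_i] = \sigma_w(A)$ by definition of $\sigma_w$. The quantity returned by the procedure is the empirical mean $\bar Y := \frac{1}{T}\sum_{i=1}^{T} Y_i$, so $\EXPECTED[\bar Y] = \sigma_w(A)$, and the same observation yields the crucial fact $\sigma_w(A) \ge 1$.

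Next I would invoke Hoeffding's inequality: since each $Y_i$ lies in an interval of length at most $N$, for every $t>0$ we have $\Pr\bigl[\,\lvert \bar Y - \sigma_w(A)\rvert \ge t\,\bigr] \le 2\exp\bigl(-\tfrac{2Tt^2}{N^2}\bigr)$. Taking $t = \epsilon\,\sigma_w(A)$ makes the complementary event $\{\lvert \bar Y - \sigma_w(A)\rvert < t\}$ precisely the statement that $\bar Y$ is a $(1\pm\epsilon)$-approximation of $\sigma_w(A)$; bounding $\sigma_w(A)^2 \ge 1$, the failure probability is at most $2\exp\bigl(-\tfrac{2T\epsilon^2}{N^2}\bigr)$. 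Substituting $T = \tfrac{N^2}{\epsilon^2}\ln(1/\lambda)$ collapses the exponent to $-2\ln(1/\lambda) = \ln\lambda^2$, so the failure probability is at most $2\lambda^2$, which is exactly the claim.

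The only genuinely delicate point — and the reason the assumption ``$A$ contains a weight-$1$ node'' is stated — is the passage from the additive control that Hoeffding provides to a relative error of $\epsilon$: this requires $\sigma_w(A)$ to be bounded away from $0$, and without a guaranteed always-influenced weight-$1$ node $\sigma_w(A)$ could be arbitrarily small, so no fixed sample size would force a multiplicative guarantee. Everything else is routine; the only arithmetic care needed is to use the crude range bound $N$ (rather than $N-1$) on the $Y_i$'s so that the constants line up to give precisely $2\lambda^2$.
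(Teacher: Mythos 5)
Your proposal is correct and follows essentially the same route as the paper: both apply a Hoeffding-type bound to the i.i.d.\ per-simulation counts (the paper phrases them as fractions in $[0,1]$, you keep the raw counts in $[1,N]$, which is the same bound after rescaling), choose the deviation level $\epsilon\,\sigma_w(A)$ to turn the additive bound into a multiplicative one, and use $\sigma_w(A)\ge 1$ — guaranteed by the weight-$1$ seed — together with the stated $T$ to obtain the failure probability $2\lambda^2$. No gaps; your remark on why the weight-$1$ seed assumption is needed matches the paper's use of it exactly.
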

\begin{proof}
Assume that the diffusion process is repeated $T \geq \frac{N^2}{\epsilon^2} \cdot ln \left( \frac{1}{\lambda}\right)$ times.

Let $X_1, X_2, ..., X_T \in [0,1]$ be the fraction of influenced nodes with weight 1 in each of these runs. The estimate of  $\sigma_w(A)$ is

\begin{equation}
    \hat{\sigma}_w(A) = \frac{N}{T} \sum_{i=1}^T X_i \nonumber
\end{equation}

Writing $X=\sum_{i=1}^T X_i$, then $E[X] = \frac{T}{N} \sigma_w(A)$.
Standard bounds (e.g., Theorem 2.3 from \cite{COLIN_MCDIARMID}) give that

\begin{equation}
\label{equation_standardbound_zerooneweightedinfluence}
    Pr\left(\abs{X - \frac{T}{N} \sigma_w(A)} \geq T\gamma \right) \leq 2e^{-2T \gamma^2} \forall \gamma \geq 0
\end{equation}

Considering $\gamma = \frac{\epsilon}{N} \sigma_w(A)$, rearranging the left side of inequality \ref{equation_standardbound_zerooneweightedinfluence} we get

\begin{equation}
    Pr\left(\abs{X\cdot \frac{N}{T} - \sigma_w(A)}\geq \epsilon \sigma_w (A) \right)     \nonumber
\end{equation}

and rearranging the right side of inequality \ref{equation_standardbound_zerooneweightedinfluence} we get

\begin{equation}
    2e^{-2T \gamma^2} \leq 2e^{-2 ln \left(\frac{1}{\lambda}\right) \sigma^2_w (A)} \leq  2e^{-2 ln \left(\frac{1}{\lambda}\right)} = 2\lambda^2 \nonumber
\end{equation}

where the last inequality holds if $\sigma_w(A) \geq 1$, which is true since $A$ contains at least one node $v$ such that $w(v)=1$.\qed
\end{proof}

Moreover, observe that the set of influencers returned by our algorithm has a size that does not exceed the budget, and hence feasible. We next show that it returns a constant approximation of the optimal seed set whenever the view of all voters coincides with candidates' real positions and the campaign for the target candidate does not advantage more other candidates than the target itself.
Specifically, let $X(S)$ be the expected maximum -- among all candidates $c \neq c^*$ -- of the number of voters that do not vote for $c$ and they will do after receiving a message supporting candidate $c^*$ starting from nodes in $S$ (where the expectation is taken over the probabilities of receiving this message). Then we have the following theorem.
\begin{theorem}
 \label{th:apx}
 The set of influencers $\hat{S} \subseteq V$ returned by our algorithm is such that $\EXPECTED[\Delta MoV (\hat{S})] + X(\hat{S}) \geq \frac{1}{3}\cdot \left( 1- \frac{1}{e}\right) \cdot max_S \EXPECTED[\Delta MoV (S)]$ whenever $x_c^v = x_c$ for each $v \in V$ and each $c \in C$.
 Hence, our algorithm returns a constant approximation whenever $X(\hat{S}) = O(\EXPECTED[\Delta MoV (\hat{S})])$.
 \end{theorem}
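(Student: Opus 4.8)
The plan is to route the whole argument through the weighted Influence Maximization tool recalled in Section~\ref{sec:model}. First I would make precise what the algorithm actually optimizes. Since views coincide with true positions, for every voter $v$ it is determined \emph{independently of $S$} whether activating $v$ makes $c^*$ her new top choice; let $W$ be the set of such voters, so the algorithm runs weighted influence maximization with $w\equiv 1$ on $W$ and $w\equiv 0$ elsewhere. The key observation is that, in any realization of the Independent Cascade started from any seed set $S$, the voters who end up voting for $c^*$ but did not before are \emph{exactly} $A(S)\cap W$: an activated voter moves monotonically towards $x_{c^*}$ and hence never abandons $c^*$, and her new ballot is a deterministic function of whether she is activated. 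Consequently $\sigma_w(S)=\EXPECTED[\,|A(S)\cap W|\,]=\EXPECTED[\,|V^*_{c^*}|-|V_{c^*}|\,]=:f(S)$, the expected number of votes gained by $c^*$. By the hill-climbing guarantee recalled in Section~\ref{sec:model} (together with Proposition~\ref{proposition_weightedinfluencemaximizationzeroone}, which makes the Monte-Carlo estimates run in polynomial time), the returned $\hat S$ satisfies $f(\hat S)\ge(1-\tfrac1e-\varepsilon)\max_{|S|\le B} f(S)$.

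Second, I would establish the elementary lower bound $\EXPECTED[\Delta MoV(S)]+X(S)\ge f(S)$ for every $S$, in particular for $\hat S$. Per realization: $|V^*_{c^*}|=|V_{c^*}|+g$ with $g=|A(S)\cap W|$, and for every opponent $c$ one has $|V^*_c|\le|V_c|+\mathrm{gain}_c(S)\le\max_{c'\ne c^*}|V_{c'}|+\max_{c'\ne c^*}\mathrm{gain}_{c'}(S)$, hence $\max_{c\ne c^*}|V^*_c|\le\max_{c\ne c^*}|V_c|+\max_{c\ne c^*}\mathrm{gain}_c(S)$; plugging these into the definition of $\Delta MoV$ gives $\Delta MoV(S)\ge g-\max_{c\ne c^*}\mathrm{gain}_c(S)$, and taking expectations yields the claim.

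Third — and this is the technical heart — I would prove $\max_{|S|\le B}\EXPECTED[\Delta MoV(S)]\le 3\max_{|S|\le B} f(S)$. Fix the $\Delta MoV$-optimal $S^\star$, a realization, and let $\tilde c$ be the pre-campaign best opponent (which does not depend on $S^\star$ or on the realization). A short case distinction, on whether the post-campaign best opponent already had at least $|V_{\tilde c}|$ votes, shows $\Delta MoV(S^\star)\le g+\mathrm{loss}_{\tilde c}(S^\star)$. Now single-peakedness must be used: a voter who abandons $\tilde c$ moves by $\min(\delta,|x_{c^*}-x_v|)$ towards $x_{c^*}$, so her new top choice lies on the axis strictly between $\tilde c$ and $c^*$ (or is $c^*$); using the \emph{shared} step size $\delta$ and the clipping at $x_{c^*}$ one argues that these displaced voters are absorbed by essentially one intermediate candidate, so that $\mathrm{loss}_{\tilde c}(S^\star)\le g+\max_{c\ne c^*}\mathrm{gain}_c(S^\star)$, and moreover that such a displacement is ``shadowed'' by a seed set carrying a comparable number of voters all the way to $c^*$ (informally, a $\delta$ large enough to eject voters from $\tilde c$'s cell past an intermediate candidate is large enough to carry the voters closest to $c^*$ into $c^*$'s cell), so $\max_{c\ne c^*}\mathrm{gain}_c(S^\star)=O(\max_{|S|\le B} f(S))$ as well. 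Combining these estimates produces the constant $3$.

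Finally I would chain the three facts: $\EXPECTED[\Delta MoV(\hat S)]+X(\hat S)\ge f(\hat S)\ge(1-\tfrac1e-\varepsilon)\max_S f(S)\ge\tfrac13(1-\tfrac1e-\varepsilon)\max_S\EXPECTED[\Delta MoV(S)]$, the $\varepsilon$ being the (suppressible) accuracy loss of the influence-maximization step; this gives the stated $\tfrac13(1-\tfrac1e)$. The closing sentence of the theorem is then immediate: if $X(\hat S)=O(\EXPECTED[\Delta MoV(\hat S)])$ then $\EXPECTED[\Delta MoV(\hat S)]=\Omega(1)\cdot\max_S\EXPECTED[\Delta MoV(S)]$. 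I expect the main obstacle to be the third step: converting the purely qualitative ``votes can only drift towards $c^*$'' intuition into the two quantitative claims above with a \emph{constant} loss rather than one scaling with the number of candidates — precisely the place where the single-peaked geometry of positions, combined with the $(\delta,\text{clip})$ update dynamics, has to be exploited carefully.
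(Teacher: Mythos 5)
Your steps 1 and 2 are fine and match the paper's setup: the reduction to weighted influence maximization over the set $M$ of convertible voters with the $(1-1/e)$ guarantee, and the per-realization lower bound $\Delta MoV(S) \geq |A(S)\cap M| - \max_{c\neq c^*}\mathrm{gain}_c(S)$, whose expectation gives $\EXPECTED[\Delta MoV(\hat S)] + X(\hat S) \geq f(\hat S)$ (this is a weakened form of the lower half of \eqref{eq:wilder_proof_alternative_deltamov}). The genuine gap is step 3. The inequality you plan to prove, $\max_{|S|\le B}\EXPECTED[\Delta MoV(S)] \le 3\max_{|S|\le B} f(S)$, is false in general, and the paper's own footnote describes the counterexample regime: take an extremist target $c^*$ whose cell can be reached (within $\delta$) by only a couple of voters, a half-extreme candidate with few supporters, and a moderate best opponent with many supporters. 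A campaign that activates, say, half of the moderates shifts them into the half-extreme cell: the best opponent's tally collapses, $\Delta MoV(S^\star)$ is of the order of the moderate bloc, yet $\max_S f(S)$ is bounded by the handful of voters who can ever be carried into $c^*$'s cell. Both of your auxiliary claims break exactly here: the ``shadowing'' assertion $\max_{c\neq c^*}\mathrm{gain}_c(S^\star)=O(\max_S f(S))$ fails because a $\delta$ large enough to eject voters from $\tilde c$'s cell need not move anyone into $c^*$'s (possibly distant, sparsely approached) cell; and the claim that displaced voters are ``absorbed by essentially one intermediate candidate'' fails because $\tilde c$'s supporters occupy an interval which, after the uniform shift by $\delta$, can straddle several cells, so the honest bound is $\mathrm{loss}_{\tilde c}\le g+\sum_{c}\mathrm{gain}_c$, costing a factor that grows with the number of candidates rather than a constant.

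The paper's proof does not attempt any such unconditional comparison between $\max\EXPECTED[\Delta MoV]$ and $\max f$. Instead, the erosion toward other candidates is retained only for the algorithm's output, where it is exactly the additive term $X(\hat S)$, while $\Delta MoV(S^*,y)$ is upper-bounded through \eqref{eq:wilder_proof_alternative_deltamov} by $f(S^*,y,M)+\min_{c_j}\bigl[f(S^*,y,M\cap V_{c_j})-|V_{c_j}|\bigr]+\max_{c_i}|V_{c_i}|$; the factor $\frac13$ then comes from the three-way split \eqref{eq:wilder_proof_first_second_third_bound}, i.e., from the fact that $f(S^*,y,M)$ dominates each of $f(S^*,y,M\cap V_{c(S^*,y)})$ and $f(S^*,y,M\cap V_{c(\hat S,y)})$, together with the bookkeeping of the $|V_{c(\cdot,y)}|$ terms via \eqref{eq:wilder_proof_last}. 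Note also that your step 2 throws away the non-negative term $\min_{c_j}\bigl[f(S,y,M\cap V_{c_j})-|V_{c_j}|\bigr]+\max_{c_i}|V_{c_i}|$ (the double credit earned when converted voters are drawn from the current best opponent), which is precisely what the paper needs to keep in order to compare against the optimum without ever bounding the optimum by $3\max_S f(S)$. As it stands, your third step would have to be replaced by an argument of this type; it cannot be repaired by sharpening the single-peaked geometry alone.
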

\begin{proof}
    Given $G=(V, E, p)$, the corresponding live graph is a graph $G'$ over $V$ in which each edge $e \in E$ belongs to $G'$ with probability $p(e)$. Observe that there are $2^{|E|}$ possible live graphs, and let us denote with $\mathcal{G}$ the set containing all these live graphs. Moreover we can associate a distribution $\mathcal{P} \colon \mathcal{G} \rightarrow [0, 1]$ to these live graphs, with $\mathcal{P}(G_i) = \prod_{e \in G'} p(e) \prod_{e \notin G'} (1-p(e))$ for each $G_i \in \mathcal{G}$. Given a live graph $G_i$ and a seed set $S$, we denote with $f(S, i)$ the number of nodes reachable from the seed set $S$ in the live graph $G_i$. It is immediate to see that the expected number of influenced nodes in the graph, starting the diffusion from the seed set $S$, is simply $f(S)=\mathop{\mathbb{E}}_\mathcal{P} \left[ f(S, i) \right]$.

    Recall that $M$ is the set of voters that do not vote for $c^*$ but will do if influenced.
    Let $V_c$ be the set of voters that vote for candidate $c$ before the manipulation, and $\overline{M}_c $ be those voters that do not vote for $c$, but they will vote for $c$ if they receive a message supporting $c^*$, i.e., if they move their belief of at most $\delta$ towards $x_{c^*}^v$.
    Note that $M_c \cap V_{c^*}$ must be empty, since it is impossible to make a voter that prefers $c^*$ to change her mind through a message supporting $c^*$.

    Moreover, let $\hat{S}$ be the seed set returned by our algorithm, and $S^*$ be an optimal solution maximizing the expected change of margin of victory.

    After the manipulation due to the influencers $\hat{S}$, the increment in the margin of victory between $c^*$ and any other candidate $c_i \neq c^*$ can be expressed (considering the diffusion over the graph $G_y$) as $g_{c^*}(\hat{S}, y, c_i)$ and observe that
    \begin{align*}
        & \sum_{v \in M - V_{c_i}} \chi(v, \hat{S}, y) + 2 \sum_{v \in M \cap V_{c_i}} \chi (v, \hat{S}, y) \geq\\
        & g_{c^*}(\hat{S}, y, c_i) \geq \sum_{v \in M - V_{c_i}} \chi(v, \hat{S}, y) + 2 \sum_{v \in M \cap V_{c_i}} \chi (v, \hat{S}, y)\\
        & \qquad - \sum_{v \in \overline{M}_{c_i}} \chi(v, \hat{S}, y)\\
        & \geq \sum_{v \in M - V_{c_i}} \chi(v, \hat{S}, y) + 2 \sum_{v \in M \cap V_{c_i}} \chi (v, \hat{S}, y) - \max_{c \neq c^*} \sum_{v \in \overline{M}_c} \chi(v, \hat{S}, y),
    \end{align*}
    where $\chi(v, \hat{S}, y)$ is 1 if node $v$ is reachable from nodes in $\hat{S}$ in the live graph $G_y$, else 0. The change of margin of victory in the live graph $G_y$ is then
    \begin{equation}
        \Delta MoV(\hat{S}, y) = \min_{c_j \neq c^*} \left\{g_{c^*}(\hat{S}, y, c_j) - |V_{c_j}| + \max_{c_i \neq c^*} |V_{c_i}| |\right\} \nonumber
    \end{equation}
    and the expected change of margin of victory is $\mathop{\mathbb{E}}_y \left[ \Delta MoV (\hat{S}, y)\right]$. Note that
    $$\sum_{v \in M - V_{c_i}} \chi(v, \hat{S}, y) + 2 \sum_{v \in M \cap V_{c_i}} \chi (v, \hat{S}, y) = f(\hat{S}, y, M) + f(\hat{S}, y, M \cap V_{c_i}),$$
    where $f(S, y, A)$ is the number of nodes in $A$ reachable from $S$ in the live graph $G_y$. Then we have that
    \begin{equation}
    \label{eq:wilder_proof_alternative_deltamov}
    \begin{aligned}
        & f(\hat{S}, y, M) + \min_{c_j \neq c^*} \left[ f(\hat{S}, y, M \cap V_{c_j}) - |V_{c_j}| + \max_{c_i \neq c^*} |V_{c_i}|  \right] \geq\\
        & \Delta MoV(\hat{S}, y) \geq f(\hat{S}, y, M)\\
        & + \min_{c_j \neq c^*} \left[ f(\hat{S}, y, M \cap V_{c_j}) - |V_{c_j}| + \max_{c_i \neq c^*} |V_{c_i}|  \right]  - \max_{c \neq c^*} \sum_{v \in \overline{M}_c} \chi(v, \hat{S}, y).
    \end{aligned}
    \end{equation}

    By definition, our algorithm greedily maximizes the submodular function $\mathop{\mathbb{E}}_y \left[f(\cdot, y, M)\right]$. In fact, it maximizes the expected sum of the weights of the influenced nodes in the derived influence maximization problem; but weights are defined such that the expected sum of influenced nodes is exactly the expected number of influenced nodes in $M$, that is $\mathop{\mathbb{E}}_y \left[f(\cdot, y, M)\right]$. Hence, if $S'= \arg\max_A \mathop{\mathbb{E}}_y \left[f(A, y, M)\right]$, then
    \begin{equation}
    \label{eq:wilder_proof_influence_maximization_bound}
    \begin{aligned}
        \mathop{\mathbb{E}}_y \left[f(\hat{S}, y, M)\right] & \geq \left(1-\frac{1}{e}\right)\cdot \mathop{\mathbb{E}}_y \left[f(S', y, M)\right]\\ & \geq \left(1-\frac{1}{e}\right)\cdot \mathop{\mathbb{E}}_y \left[f(S^*, y, M)\right]
    \end{aligned}
    \end{equation}
    where the first inequality holds because of the guarantees of the algorithm solving weighted influence maximization, whereas the second inequality holds because $S'$ maximizes $\mathop{\mathbb{E}}_y \left[f(\cdot, y, M)\right]$ by definition.

    Let $c(\hat{S}, y)=\arg\min_{c_i} f(\hat{S}, y, M \cap V_{c_i})-|V_{c_i}|$ be the candidate achieving the minimum in the definition of $\Delta MoV(\hat{S}, y)$ in \eqref{eq:wilder_proof_alternative_deltamov}. Note that for any candidate $c_i$, $f(\hat{S}, y, M) \geq f(S, y, M \cap V_{c_i})$, simply because $M \cap V_{c_i} \subseteq M$. Hence, for $S \in \{S^*, \hat{S}\}$
    \begin{equation}
    \label{eq:wilder_proof_second_bound}
        \mathop{\mathbb{E}}_y \left[f(S^*, y, M)\right] \geq \mathop{\mathbb{E}}_y \left[f(S^*, y, M \cap V_{c(S, y)})\right]
    \end{equation}

    By \eqref{eq:wilder_proof_second_bound} we get
    \begin{equation}
    \label{eq:wilder_proof_first_second_third_bound}
    \begin{aligned}
        \mathop{\mathbb{E}}_y \left[f(S^*, y, M)\right] & \geq  \frac{1}{3} \left( \mathop{\mathbb{E}}_y \left[f(S^*, y, M)\right]\right.\\
        & + \mathop{\mathbb{E}}_y \left[f(S^*, y, M \cap V_{c(S^*, y)})\right]\\
        & + \left.  \mathop{\mathbb{E}}_y \left[f(S^*, y, M \cap V_{c(S, y)})\right] \right)
    \end{aligned}
    \end{equation}

    Moreover,
    \begin{equation}
    \label{eq:wilder_proof_mixed_bound}
    \begin{aligned}
        & \mathop{\mathbb{E}}_y \left[ f(\hat{S}, y, M) + f(\hat{S}, y, M \cap V_{c(\hat{S}, y)}) \right] \geq \mathop{\mathbb{E}}_y \left[ f(\hat{S}, y, M)\right]
        \\
        & \geq \frac{1}{3}\left( 1- \frac{1}{e} \right) \cdot \left( \mathop{\mathbb{E}}_y \left[f(S^*, y, M)\right] +
        \mathop{\mathbb{E}}_y \left[f(S^*, y, M \cap V_{c(S^*, y)})\right]\right.\\
        & \qquad \left. +  \mathop{\mathbb{E}}_y \left[f(S^*, y, M \cap V_{c(S, y)})\right]
        \right)
    \end{aligned}
    \end{equation}
    where the first inequality holds because the left hand side includes an extra non-negative term, and the second inequality derives from inequalities \ref{eq:wilder_proof_first_second_third_bound} and \ref{eq:wilder_proof_influence_maximization_bound}.
    Also note that $\mathop{\mathbb{E}}_y\left[\max_{c \neq c^*} \sum_{v \in \overline{M}_c} \chi(v, \hat{S}, y)\right] = X(\hat{S})$.
    We can bound the margin of victory of $\hat{S}$ as follows (for simplicity, we will write ``$\min_{c_j}$'' for ``$\min_{c_j\neq c^*}$'', and the same for ``$\max_{c_j}$''):
    \begin{align*}
        & \mathop{\mathbb{E}}_y \left[ \Delta MoV (\hat{S}, y) \right] + X(\hat{S})\\
        & \geq \mathop{\mathbb{E}}_y\left[ f(\hat{S}, y, M) + \min_{c_j} \left(f(\hat{S}, y, M \cap V_{c_j}) - |V_{c_j}|\right) + \max_{c_i} |V_{c_i}| \right] \tag*{(by \eqref{eq:wilder_proof_alternative_deltamov})}\\
        & = \mathop{\mathbb{E}}_y\left[ f(\hat{S}, y, M) + f(\hat{S}, y, M \cap V_{c(\hat{S}, y)})\right] + \mathop{\mathbb{E}}_y\left[ \max_{c_i} |V_{c_i}| - |V_{c(\hat{S}, y)}|\right] \tag*{(by definition of $c(\hat{S}, y)$)}\\
        & \geq \frac{1}{3}\left( 1- \frac{1}{e} \right) \cdot \mathop{\mathbb{E}}_y\biggl[ f(S^*, y, M) + f(S^*, y, M \cap V_{c(S^*, y)})\\
        & \qquad + f(S^*, y, M\cap V_{c(\hat{S}, y)}) \biggr] + \mathop{\mathbb{E}}_y\left[\max_{c_i} |V_{c_i}| - |V_{c(\hat{S}, y)}|\right] \tag*{(by \eqref{eq:wilder_proof_mixed_bound})} \\
        & \geq \frac{1}{3}\left( 1- \frac{1}{e} \right) \cdot \mathop{\mathbb{E}}_y\biggl[ f(S^*, y, M) + f(S^*, y, M \cap V_{c(S^*, y)})\\
        &\qquad + f(S^*, y, M\cap V_{c(\hat{S}, y)}) + \max_{c_i} |V_{c_i}| -  |V_{c(\hat{S}, y)}|\biggr] \tag*{(1/3(1-1/\emph{e}) multiplies every term)}\\
        & = \frac{1}{3}\left( 1- \frac{1}{e} \right) \cdot \mathop{\mathbb{E}}_y\biggl[ f(S^*, y, M) + f(S^*, y, M \cap V_{c(S^*, y)})\\
        & \qquad + f(S^*, y, M\cap V_{c(\hat{S}, y)}) + \max_{c_i} |V_{c_i}| -  |V_{c(\hat{S}, y)}|+ |V_{c(S^*, y)}|-|V_{c(S^*, y)}|\biggr] \tag*{(add and subtract the same quantity)} \\
        & = \frac{1}{3}\left( 1- \frac{1}{e} \right) \cdot \mathop{\mathbb{E}}_y\biggl[f(S^*, y, M) + \min_{c_j}\left(f(S^*, y, M \cap V_{c_j})-|V_{c_j}|\right)\\
        &\qquad +f(S^*, y, M \cap V_{c(\hat{S}, y)}) + \max_{c_i} |V_{c_i}| - |V_{c(\hat{S}, y)}| + |V_{c(S^*, y)}|\biggr] \tag*{(by definition of $c(S^*,y)$)}\\
        & \geq \frac{1}{3}\left( 1- \frac{1}{e} \right) \cdot \biggl( \mathop{\mathbb{E}}_y[\Delta MoV(S^*, y)] + \mathop{\mathbb{E}}_y\biggl[ f(S^*, y, M \cap V_{c(\hat{S}, y)})\\
        & \qquad + |V_{c(S^*, y)}| - |V_{c(\hat{S}, y)}|\biggr]\biggr) \tag*{(by \eqref{eq:wilder_proof_alternative_deltamov})}
    \end{align*}

    By definition of $c(S^*, y)$, $f(S^*, y, M\cap V_{c(S^*, y)}) - |V_{c(S^*, y)}| \leq f(S^*, y, M\cap V_{c(\hat{S}, y)}) - |V_{c(\hat{S}, y)}| $, and hence
    \begin{equation}
    \label{eq:wilder_proof_last}
        |V_{c(S^*, y)}| - |V_{c(\hat{S}, y)}| \geq f(S^*, y, M \cap V_{c(S^*, y)}) - f(S^*, y, M \cap V_{c(\hat{S}, y)})
    \end{equation}

    We can conclude that
    \begin{align*}
        & \mathop{\mathbb{E}}_y[\Delta MoV(\hat{S}, y)] + X(\hat{S})\\
        & \geq \frac{1}{3}\left( 1- \frac{1}{e} \right) \cdot \biggl( \mathop{\mathbb{E}}_y[\Delta MoV(S^*, y)] + \mathop{\mathbb{E}}_y\biggl[f(S^*, y, M \cap V_{c(\hat{S}, y)})\\
        &  \quad + f(S^*, y, M \cap V_{c(S^*, y)})-f(S^*, y, M \cap V_{c(\hat{S}, y)}) \biggr]\biggr) \tag*{(by \eqref{eq:wilder_proof_last})}\\
        & = \frac{1}{3}\left( 1- \frac{1}{e} \right) \cdot \biggl( \mathop{\mathbb{E}}_y[\Delta MoV(S^*, y)] + \mathop{\mathbb{E}}_y\biggl[f(S^*, y, M \cap V_{c(S^*, y)})\biggr]\biggr) \\
        & \geq \frac{1}{3}\left( 1- \frac{1}{e} \right) \cdot\mathop{\mathbb{E}}_y[\Delta MoV(S^*, y)] \tag*{(since \emph{f} is non-negative)}
    \end{align*}

    Thus if $X(\hat{S}) = O(\mathop{\mathbb{E}}_y[\Delta MoV(\hat{S}, y)])$, then there is a constant $c$ such that $X(\hat{S}) \leq c \cdot \mathop{\mathbb{E}}_y[\Delta MoV(\hat{S}, y)]$, and thus
    $$
     \mathop{\mathbb{E}}_y[\Delta MoV(\hat{S}, y)] \geq \frac{1}{3(c+1)}\left( 1- \frac{1}{e} \right) \cdot\mathop{\mathbb{E}}_y[\Delta MoV(S^*, y)],
    $$
    and thus a constant approximation.
    \qed
\end{proof}

 Note that in case of nearly-single-peaked electorates the algorithm still works, but its performances in terms of $\Delta MoV(\cdot)$ depend on the amount of noise: the higher the noise, the more the inconsistency between reality and blurred views of the voters, the lower the effectiveness of the manipulation since the manipulator estimates wrong weights $w(v)$ for each voter $v \in V$.

\section{The Heuristics}
The approximation algorithm presented in the previous section provides a formal guarantee of the quality of the solution. However, it inherits from the weighted influence maximization algorithm used as black box the computational drawbacks of being computationally expensive, even if it is polynomial in the size of the input.
Specifically, the proposed algorithm requires a large number of simulations  in order to estimate the marginal influence of each node (see Proposition~\ref{proposition_weightedinfluencemaximizationzeroone}).
Even if faster algorithms have been proposed for the influence maximization algorithm (see, e.g., \cite{borgs2014maximizing}), the influence maximization algorithm is often solved in practice through fast heuristics based only on the structure of the network: they, indeed, assign scores to the nodes in the graph defining their ``importance'', and then they simply return the nodes with the highest scores.

In this work, we propose to extend this approach in order to encompass the problem of election manipulation. Specifically, this work introduces several heuristics in that sense: they are based on both the structure of the networks and centrality metrics adapted to the election context.
We next present in details these heuristics.

The core idea involves computing the distance on the political spectrum of a voter from the target candidate. Among the supporters of candidates other than the target $c^*$, the closer is this voter to $c^*$, the higher is the importance of this voter for manipulation purposes.
However, the score must also take into account social influence and the ability of a (possibly useless for manipulation purposes) voter to influence other (useful for manipulation purposes) voters. These ideas lead to two different classes of heuristics that we will describe below, and whose performances are described in the next section.

\paragraph{Scoring-based Heuristics.}
The first proposed class of heuristics considers the nodes in the neighbourhood $N(v)$ of each voter $v$ and their inclination to support the target candidate. Different heuristics in this class can be distinguished by how the neighborhood $N(v)$ is defined. For example,
\begin{itemize}
    \item we can consider in $N(v)$ only neighbours up to a given distance. For instance, setting this limit to 1, we consider only $v$'s friends; setting it to 2, we consider $v$'s friends and, in turn, their friends, too.
    \item $N(v)$ can be limited by a maximum number of nodes to consider, regardless of the distance. One can consider the neighbours at increasing distances from $v$, as long as the number of neighbours does not exceed a given threshold.
    \item we can limit $N(v)$ by both the constraints above.
\end{itemize}

Given $N(v)$, the higher the number of friends of $v$, the higher is the probability of exerting social influence. Clearly, this probability also depends on the distance from $v$, since neighbors of $v$ only require that influence from $v$ is successful, while for a friend of a friend $u$ of $v$ we require that not only the influence of $v$ but also the influence of $u$ is successful. These considerations justify the definition of a first score $s_G(v) = \sum_{u \in N(v)} 1/d(v, u)$ for the node $v$ purely based on the structure of the network $G$,
where $d(v, u)$ is the length of the shortest path linking $v$ to $u$.

We also consider political information about voters in the neighbourhood of $v$: indeed, we define
$s_P(v) = \sum_{u \in N(v) \cup \{v\} : F[ d_P(u, c^*) ] = 1} w(v, u) \cdot 1/d_P(u, c^*)$ for the node $v$,
where $d_P(u, c^*)$ is a political distance function, that measures the distance on the political spectrum between $c^*$ and voter $u$; $F[\cdot]$ is a custom filter function, allowing to discard nodes depending on the distance $d_P$ (e.g., we may discard nodes $u$ such that $d_P(u, c^*)=0$, which means that they already vote for $c^*$); $w(v, u)$ is the product of the probabilities of the edges along the shortest path from $v$ to $u$. In this way, nodes that are more likely to be influenced by $v$ are assigned a high weight. We assume that $w(v, v)=1$.
Hence, $s_P(v)$ is high when the neighbourhood of $v$ is large, neighbours' political positions are not too far from $c^*$'s, and $v$ is likely to influence his neighbourhood, as desired.

We can then achieve different heuristics by combining these two scores in different ways. The following are possible combinations:
\begin{itemize}
    \item $s_1(v) = (s_G(v), s_P(v))$
    \item $s_2(v) = (s_P(v), s_G(v))$
    \item $s_3(v) = \alpha \hat{s}_P(v) + (1-\alpha) \hat{s}_G(v)$
\end{itemize}
where $\hat{s}_G(v)$ is $s_G(v)$ divided by the standard deviation of $s_G(u)$ over all the nodes $u \in V$; similarly for $\hat{s}_P(v)$. By using $s(v)=(X, Y)$, we mean that when comparing the scores of two nodes $u$ and $v$, the comparison is performed on the value of $X$, whereas $Y$ breaks ties. Having defined the score of the nodes, the algorithm greedily chooses the $B$ nodes $v$ with the highest score $s(v)$.

\paragraph{PageRank-based Heuristics.}
Motivated by the fact that PageRank is a fast heuristic that embodies the structure of the whole network, this work also suggests a second fast heuristic based on a special version of PageRank specifically thought for the problem of election manipulation. In particular, we observe that while the original PageRank shares the importance of a node among all its neighbors with probability $1-s$, and among all nodes with probability $s$, this does not makes really sense in our setting, since we already known that there are nodes that are more useful to the goal (i.e., the ones that may be lead to vote for $c^*$), and others that are less useful to this purpose (i.e., the ones that already vote for the desired candidate or the ones that cannot be made to vote for $c^*$).

Our idea is to share the rank of node $u$, by first assigning to each neighbor $v$ a weight $z_u(v) \geq 0$, quantifying how strongly $v$ positively influences other nodes to switch to $c^*$, and to share the PageRank of $u$ among its neighbors proportionally to their weight.
It is not hard to check that this modification does not affect the Markovian nature of PageRank computation, and hence the fact that it eventually converges to a stable set of values.

How should the weights be selected? Scores defined above already provide a measure of how much the social influence of the node serves to the purpose of the manipulator. Hence, we can for example set $z_u(v) = s_3(v)$ for a specific value of $\alpha$. Another variant of this heuristics, can be achieved by including in the weight $z_u(v)$ also information related to the diffusion probabilities, i.e., $z_u(v) = s_3(v) \cdot p(u, v)$.
In this way, this heuristics tends to assign a high score to nodes:
\begin{itemize}
    \item generally important for the diffusion in the network (inheriting PageRank properties about influence maximization), and
    \item with a high probability of influencing their neighbourhood, and
    \item whose neighbourhood does not already support $c^*$ (need to choose an appropriate filter function $F[\cdot]$), and
    \item whose neighbourhood contains nodes that are not too far from the target candidate on the political spectrum (thus, they are willing to change their mind in favour of $c^*$).
\end{itemize}

\paragraph{The Full List of Heuristics.}
Based on the above ideas, we next describe the heuristics that we considered in this work.

All heuristics adopting the neighbourhood score use the following distance function:
\[
d_P(u, c^*) = \begin{cases}
  0 & \text{if } u \text{ votes for $c^*$ before the manipulation}\\
  |x_u - x_{c^*}| & \text{otherwise}
\end{cases}
\]
This highlights the reasons for which this heuristics should perform worse in nearly-single-peaked scenarios: when $\NOISE \neq 0$, $x_{c^*} \neq x^v_{c^*}$ and the value of $d_P(u, c^*)$ does not reflect the distance from $c^*$ perceived by the voter $u$ (and the manipulator could also guess that a voter votes for $c^*$ when he does not actually do).

In addition, all the versions of the algorithms adopting the neighbourhood score use the following filter function:
\[
F[a] = \begin{cases}
  1 & \text{if } a>0 \\
  0 & \text{otherwise}
\end{cases}
\]
This means that when computing the political scores $s_P(v)$, the algorithms discard all the voters that already vote for $c^*$ in the neighbourhood $N(v)$.

The neighbourhood $N(u)$ was limited to consider (1) friends of node $u$ or (2) friends of $u$ and friends of $u$'s friends. No limit was applied to the size of the neighbourhood. Moreover, the variants of the algorithms consider three ways of combining scores $s_P$ and $s_G$ to obtain the score $s$, namely $s=(s_G, s_P)$, $s=(s_P, s_G)$, $s=\alpha \hat{s}_P + (1-\alpha)\hat{s}_G$.

In conclusion, the algorithms based on the neighbourhood heuristic are:
\begin{itemize}
    \item \emph{SPoutdeg}. The neighbourhood of a node only considers its friends; the combined score is $s=(s_G, s_P)$.
    \item \emph{SPoutdeg\_rev}. The neighbourhood of a node only considers its friends; the combined score is $s=(s_P, s_G)$.
    \item \emph{SPoutdeg\_merge0.5}. The neighbourhood of a node only considers its friends; the combined score is $s=0.5\cdot \hat{s}_P + 0.5\cdot \hat{s}_G$.
    \item \emph{SPneig2}. The neighbourhood of a node considers its friends and friends of friends (2 is the maximum number of hops to reach the nodes in the neighbourhood); the combined score is $s=(s_G, s_P)$.
    \item \emph{SPneig2\_rev}. The neighbourhood of a node considers its friends and friends of friends; the combined score is $s=(s_P, s_G)$.
    \item \emph{SPneig2\_merge0.5}. The neighbourhood of a node considers its friends and friends of friends; the combined score is $s=0.5\cdot \hat{s}_P + 0.5\cdot \hat{s}_G$.
\end{itemize}

We can now present the variants of the weighted PageRank algorithm that use the neighbourhood score when sharing PageRank among nodes. These are the PageRank heuristics that use the previously defined distance function $d_P$ and the previously defined filter function $F$:
\begin{itemize}
    \item \emph{SPpagerank1.0\_pos}. The scores $s$ use the neighbourhoods considering only the friends of the nodes. In particular, $s=0\cdot \hat{s}_G + 1\cdot \hat{s}_P=\hat{s}_P$. \emph{pos} stands for \emph{positive filter} $F$.
    \item \emph{SPpagerank0.5\_pos}. The scores $s$ use the neighbourhoods considering only the friends of the nodes. In particular, $s=0.5\cdot \hat{s}_G + 0.5\cdot \hat{s}_P$. \emph{pos} stands for \emph{positive filter} $F$.
    \item \emph{SPpagerank1.0\_hop2\_pos}.  The scores $s$ use the neighbourhoods considering the friends of the nodes and the friends of friends. In particular, $s=0\cdot \hat{s}_G + 1\cdot \hat{s}_P$. \emph{pos} stands for \emph{positive filter} $F$. \emph{hop2} stands for the maximum number of hops to build the neighbourhood.
\end{itemize}

In addition, there are other two algorithms based on PageRank. Recall that $M$ has been defined as the set of voters that, if influenced, will vote for $c^*$. Then we have the following algorithms:
\begin{itemize}
    \item \emph{SPpagerank1.0\_manip\_eq1}. The scores $s$ use the neighbourhoods considering only the friends of the nodes. In particular, $s=\hat{s}_P$. The distance function assigns $1$ to manipulable nodes (namely the ones in $M$), $0$ to voters that already vote for $c^*$, and $\infty$ to the rest of the voters (this represents the fact that nodes are not manipulable in favour of $c^*$). The filter function is
    \[
        F[a] = \begin{cases}
                1 & \text{if } a=1 \\
                0 & \text{otherwise}
        \end{cases}
    \]
    In the friendly name, \emph{manip} stands for \emph{manipulable} and \emph{eq1} is a short name for the filter function.
    \item \emph{SPpagerank1.0\_manip*\_pos}. The scores $s$ use the neighbourhoods considering only the friends of the nodes. In particular, $s=\hat{s}_P$. The distance function assigns $1$ to manipulable nodes (namely the ones in $M$), $0$ to voters that already vote for $c^*$, and
    \[
    \frac{|x_v-x_{c^*}|}{|x_v-x_{c^*}| - |\hat{x}_v-x_{c^*}|}
    \]
    to the rest of the voters $v$, where $\hat{x}_v$ is the position of voter $v$ if he was influenced. \emph{manip*} stands for the fact that the algorithm works as \emph{SPpagerank1.0\_manip\_eq1} does but considers the gain (in getting closer to $x_{c^*}$) of non-manipulable voters.
\end{itemize}

\paragraph{Complexity of Heuristics.}
To analyze the complexity of the neighbourhood heuristics, Algorithm \ref{alg:neighbourhood_complexity_sketch} sketches the main required steps.
\begin{algorithm}
\caption{Neighbourhood heuristic}\label{alg:neighbourhood_complexity_sketch}
\begin{algorithmic}
\Require the graph of voters $G=(V, E)$, the target candidate $c^*$

\For {$v$ in $V$}
    \LineComment{Perform a BFS traversal not violating the constraints on the limit of the neighbourhood}
    \State $N(v) = BFS(v)$

    \State $s_G(v) = \sum_{u \in N(v)} 1/ d(v, u)$
    \LineComment{$\beta$ is the weight function combining $F$ and $w$}
    \State $s_P(v) = \sum_{u \in N(v)} \beta(u,v)/d_P(u, c^*)$
\EndFor

\LineComment{Standardize the scores for each $v \in V$}
\State $\hat{s}_G(v) = s_G(v) / STD(s_G(\cdot))$
\State $\hat{s}_P(v) = s_P(v) / STD(s_P(\cdot))$

\LineComment{Combine $\hat{s}_G(v)$ and $\hat{s}_P(v)$ for each $v \in V$}
\State $s(v) = ...$ \Comment{Depends on the specific score}

\State \textbf{return} $s$
\end{algorithmic}
\end{algorithm}
We assume that the function $d_P(u, c^*)$ finds the most preferred candidate of voter $u$. In this way, $d_P(u, c^*)$ can be $0$ if $u$ votes for $c^*$, even if the distance between $u$ and $c^*$ is not $0$. Hence the complexity of $d_P(u, c^*)$ is $O(m)$, where $m$ is the number of candidates. As explained above, most of the experiments are performed considering that $N(v)$ contains only the friends of $v$; hence $s_G(v)=outdeg(v)$ and can be computed in $O(1)$ time with an appropriate representation of the graph. Similarly, $s_P(v)$ can be computed in $O(outdeg(v)\cdot m)$. The total cost to compute $s_G$ and $s_P$ is
\begin{equation}
    \sum_{v \in V} O(outdeg(v) \cdot m)  = O(|E|\cdot m)\nonumber
\end{equation}
The standardization costs $O(|V|)$; hence, the total cost to compute the neighbourhood heuristic is $O(|V| + m \cdot |E|)$.

Compared to the standard algorithm, the weighted version of the PageRank algorithm does not require any substantial extra cost. Thus, for simplicity, the calculations of the computational complexity to compute the PageRank heuristic are based on the original PageRank algorithm. Algorithm \ref{alg:pagerank_sketch} sketches the main steps required to calculate PageRank values of the nodes of a given graph.
\begin{algorithm}
\caption{PageRank - computational complexity: $O(k\cdot(|E|+|V|))$}\label{alg:pagerank_sketch}
\begin{algorithmic}
\Require the graph of nodes $G=(V, E)$, the number of iterations $k$, the scaling factor $s$

\State $r(v)=1/|V|, \forall v \in V$

\For {$i = 1,...,k$}

\State $r^*(v)=0, \forall v \in V$

\For {$v$ in $V$}
    \LineComment{Apply the update rule}
    \For {$u$ in out-neighbours of $v$}
        \State $r^*(u) \mathrel{+}= r(v)/outdeg(v)\cdot s$
    \EndFor
    \State $r^*(v) \mathrel{+}= (1-s)/|V|$

\EndFor
\State $r(v) = r^*(v), \forall v \in V$
\EndFor

\State \textbf{return} $r$
\end{algorithmic}
\end{algorithm}
The update rule for all the nodes requires
\begin{equation}
    \sum_{v \in V} \sum_{u \in N^+_v} O(1) = \sum_{v \in V} O(outdeg(v)) = O(|E|) \nonumber
\end{equation}
where $N^+_v$ is the set of out-neighbours of node $v$. At each iteration, an additional cost $O(|V|)$ is required to manage $r$ and $r^*$. Since the update rule is applied $k$ times to converge to stable values, the total computational complexity is $O(k\cdot(|E|+|V|))$. Note that this is a naive implementation of the PageRank algorithm. It is presented here only to pose an upper bound to the computational complexity of the PageRank heuristics. In fact, solving the problem by finding a particular eigenvector of a matrix related to the graph can speed up the computations. In fact, the implementation used for the experimental phase relies on \emph{scipy}'s methods to compute eigenvectors, which are based on ARPACK, a library specifically designed for solving large scale eigenvalue problems. Hence, the actual performances of the algorithm can be much better than implementing Algorithm \ref{alg:pagerank_sketch}. Moreover, optimized implementations of PageRank can surely scale up to large graphs.

\section{Experimental Results}
\paragraph{Experimental Setting.}
We run extensive experiments to evaluate the algorithms described above both in terms of the effectiveness of the manipulation, as measured by the margin of victory $MoV$ and the change of margin of victory $\Delta MoV$ resulting from the simulated manipulations,
and the execution time of the algorithms. As for the valuation of the running time,  we used the following software and hardware equipment:
\begin{itemize}
    \item \emph{Windows} 10 (listed here for reproducibility of the execution times);
    \item \emph{Python} 3.8.10;
    \item \emph{scipy}, version 1.8.1 (listed here for reproducibility of the PageRank algorithm).
\end{itemize}
The details of the hardware environment are:
\begin{itemize}
    \item RAM: 8 GB;
    \item CPU: i5-6400, 2.70 GHz.
\end{itemize}
We stress that we compare \emph{Python} implementations run on a single core, without  any code optimization.


%
In general we compared different algorithms in election scenarios with a set of voters
    of sizes 20, 50, and 100.
    All experiments
    involved five candidates.
Both voters and candidates were assigned positions in the range $[-1, +1]$ chosen uniformly at random. This should create scenarios in which all the parties are almost equally supported by the voters. When dealing with nearly-single-peaked electorates, the noise was chosen to be independent of the positions of the candidates, namely $\NOISE(x_c)=\NOISE$. Specifically, we considered
    \emph{uniform noise} $\NOISE=U(-0.2;0.2)$;
    \emph{Gaussian noise with low variance} $\NOISE = \mathcal{N}(0;0.08)$;
    and \emph{Gaussian noise with high variance} $\NOISE = \mathcal{N}(0;1)$.
These values were chosen to test different levels of nearly single-peaked average swap distances. In fact, when normalized with respect to the number of voters, for an election with 20 voters and 5 candidates randomly placed on the political spectrum, the average distances from single-peaked scenarios are $\sim 0.5$, $\sim 1$, and $\sim 2$ swaps (see also Figure \ref{fig:kswaps_per_noise_distribution}).
\begin{figure}[H]
    \centering
    \includegraphics[width=1.0\textwidth]{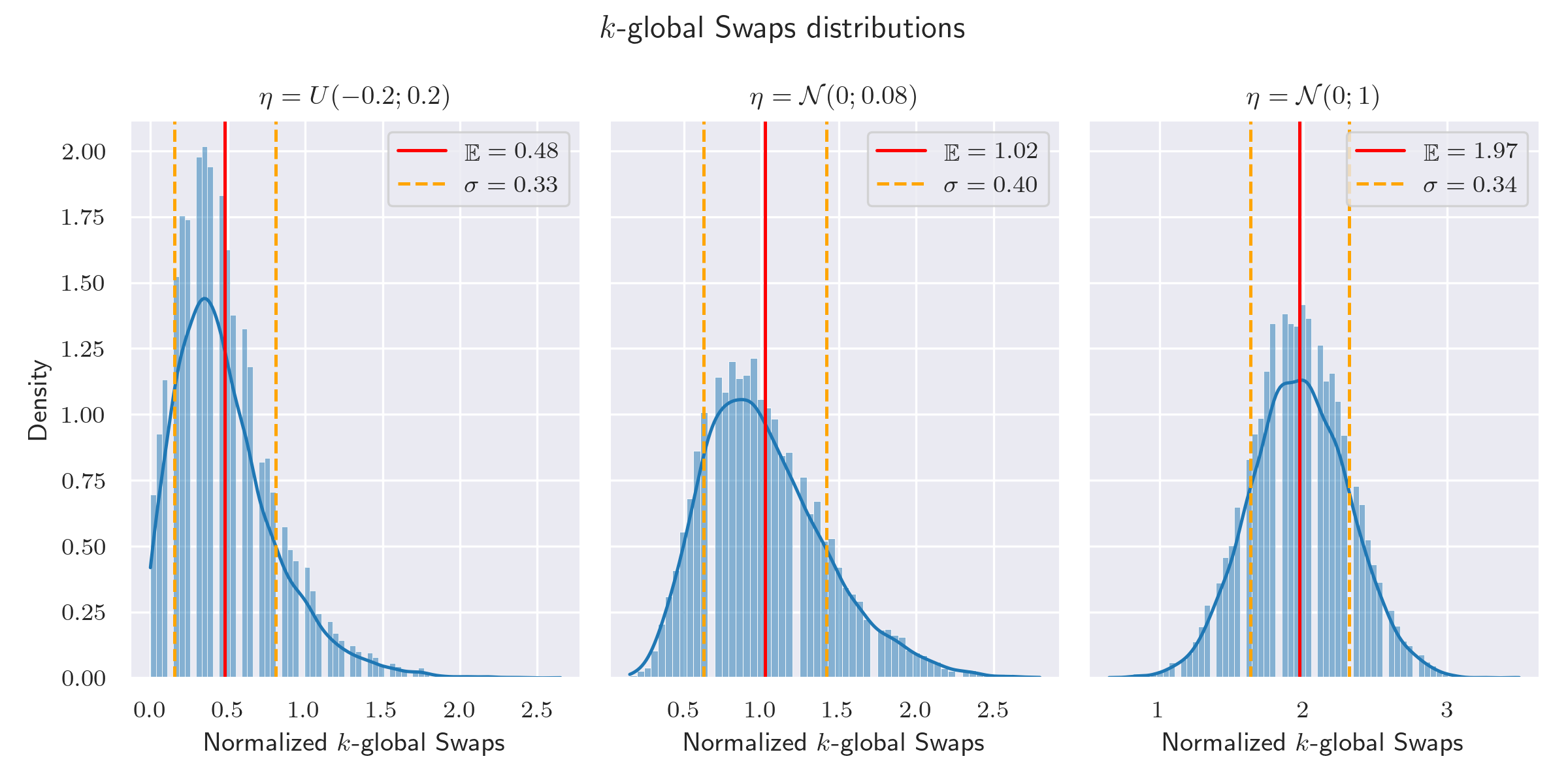}
    \caption{Distributions of $k$-global swaps distances (normalized with respect to 20 voters) for three different noises. Data of the distributions derive from 10000 elections involving 20 voters and 5 candidates placed on the political spectrum uniformly at random. The plots show the average $k$-swap distance (red) and the deviation from the mean (orange).}
    \label{fig:kswaps_per_noise_distribution}
\end{figure}

As for the parameters describing the power of the manipulator, we consider
%
    budget $B$ such that
    $\frac{B}{|V|} \in \{5\%, 10\%, 15\%\}$;
    the maximum number of manipulation campaigns
    was set to 10;
    the parameter $\delta$ that represents the influenceability of the electorate
    has been chosen in
    $\{0.1, 0.2, 0.3, 0.4\}$;
the target candidate of the manipulation process was chosen randomly among all the candidates $\{c_0, c_1, c_2, c_3, c_4\}$ unless stated differently.

Monte-Carlo simulations necessary for the approximation algorithm have been repeated 300 times (corresponding to $\lambda = \sqrt\frac{0.1}{2}$ and $\epsilon = 0.05\cdot\sqrt 2$ in Proposition~\ref{proposition_weightedinfluencemaximizationzeroone}). Moreover, estimates of $\Delta MoV$ have been evaluated over a number of simulation sufficient to achieve statistical guarantee. Specifically, the number of simulation have been selected according to the following results.
\begin{proposition}
\label{proposition_boundedvariablesadditiveapproximation}
Let $X_1, \ldots, X_T$ be i.i.d random variables such that $a \leq X_i \leq b$ with probability 1, $\forall i \in \{1,...,T\}$. Let $\mu = E[X_1]=...=E[X_T]$.

The average $\hat\mu = 1/T\cdot \sum_{i=1}^T X_i$ is an approximation to $\mu$ with an error smaller than $\epsilon$, with probability at least $1-2\lambda^2$, if

\begin{equation}
T \geq \left(\frac{b-a}{\epsilon}\right)^2 ln\left( \frac{1}{\lambda}  \right) \nonumber
\end{equation}
\end{proposition}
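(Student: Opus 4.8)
The plan is to apply a Hoeffding-type concentration bound (the same tool already cited as Theorem 2.3 of \cite{COLIN_MCDIARMID} in the proof of Proposition~\ref{proposition_weightedinfluencemaximizationzeroone}) to the i.i.d.\ bounded variables $X_1,\ldots,X_T$, and then choose $T$ large enough that the resulting tail probability is at most $2\lambda^2$. First I would write $S = \sum_{i=1}^T X_i$, so that $\hat\mu = S/T$ and $\EXPECTED[S] = T\mu$. Since each $X_i$ lies in an interval of length $b-a$, Hoeffding's inequality gives, for every $t \ge 0$,
\[
\Pr\bigl(\abs{S - T\mu} \ge t\bigr) \le 2\exp\!\left(-\frac{2t^2}{T(b-a)^2}\right).
\]

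Next I would specialize this to the event that the \emph{average} deviates by more than $\epsilon$: taking $t = T\epsilon$, the left-hand side becomes $\Pr(\abs{\hat\mu - \mu} \ge \epsilon)$, and the right-hand side becomes $2\exp\!\bigl(-2T\epsilon^2/(b-a)^2\bigr)$. It then suffices to force this bound below $2\lambda^2$, i.e.\ $\exp\!\bigl(-2T\epsilon^2/(b-a)^2\bigr) \le \lambda^2$. Taking logarithms and rearranging, this is equivalent to $2T\epsilon^2/(b-a)^2 \ge 2\ln(1/\lambda)$, that is $T \ge \bigl((b-a)/\epsilon\bigr)^2 \ln(1/\lambda)$, which is exactly the hypothesis. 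So under the stated lower bound on $T$ the error exceeds $\epsilon$ with probability at most $2\lambda^2$, and hence $\hat\mu$ is within $\epsilon$ of $\mu$ with probability at least $1-2\lambda^2$, as claimed.

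There is essentially no hard part here: the statement is a direct instantiation of Hoeffding's inequality with a clean choice of the deviation parameter, and the only thing to be careful about is bookkeeping the factor of $2$ in the exponent and the switch between deviations of the sum $S$ and of the average $\hat\mu$. One minor subtlety worth a sentence in the write-up is that the bound $2\lambda^2$ is only meaningful (i.e.\ a genuine probability) when $\lambda \le 1/\sqrt{2}$; for the parameter regime used in the experiments ($\lambda = \sqrt{0.1/2}$) this holds comfortably, and in any case for larger $\lambda$ the conclusion is vacuous. I would also note in passing that, unlike Proposition~\ref{proposition_weightedinfluencemaximizationzeroone}, no assumption such as $\sigma_w(A) \ge 1$ is needed, because here we are after an additive $\epsilon$-approximation rather than a multiplicative one, so the argument goes through for any bounded range $[a,b]$.
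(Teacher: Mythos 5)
Your proposal is correct and follows essentially the same route as the paper's own proof: apply Hoeffding's inequality to the sum $S=\sum_i X_i$, choose the deviation parameter $\gamma = T\epsilon$ to convert the statement into one about $\hat\mu$, and substitute the lower bound on $T$ to obtain the tail bound $2\lambda^2$. Your added remarks on the regime $\lambda \le 1/\sqrt{2}$ and on the absence of a $\sigma_w(A)\ge 1$-type assumption are sensible but not needed beyond what the paper does.
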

\begin{proof}
Let $X=\sum_{i=1}^T X_i$.
Applying Hoeffding's inequality we get:

\begin{equation}
\label{eq_hoeffdinginequality_propboundedvarsaddapx}
    Pr\left(  \abs{X-E[X]} \geq \gamma  \right) \leq 2 \exp \left(         -\frac{2\gamma^2}{\sum_{i=1}^T {(b-a)^2}}  \right) \forall \gamma>0
\end{equation}

Rearranging the left side of \eqref{eq_hoeffdinginequality_propboundedvarsaddapx} we get
\begin{equation}
    Pr\left( \abs{\frac{X}{T} -\mu } \geq \frac{\gamma}{T} \right) = Pr\left( \abs{\hat\mu -\mu } \geq \frac{\gamma}{T} \right)\nonumber
\end{equation}
Choosing $\gamma = T\epsilon$ and rearranging the right side of \eqref{eq_hoeffdinginequality_propboundedvarsaddapx} we get
\begin{align*}
     Pr\left( \abs{\hat\mu -\mu } \geq \epsilon \right) \leq 2 exp \left( -\frac{2\gamma^2}{\sum_{i=1}^T {(b-a)^2}}  \right) &=  2 exp \left(-\frac{2}{T(b-a)^2} T^2 \epsilon^2 \right) \tag*{by substituting $\gamma$}\\
    &=  2 exp \left(-\frac{2}{(b-a)^2} T \epsilon^2 \right)\\
    &\leq 2exp\left( -2ln\left(\frac{1}{\lambda}\right)  \right)\tag*{by substituting $T$}  \\
    &= 2\lambda^2\tag*{\qed}
\end{align*}
\end{proof}

\begin{corollary}
\label{cor:simulDmov}
 With $\left(\frac{|V| - |V_{c^*}| + |V_{\bar c}|}{\epsilon}\right)^2 ln\left( \frac{1}{\lambda}  \right)$ simulations, where $|V_{c^*}| $ is the number of votes for $c^*$ before the manipulation and $|V_{\bar c}|$ is the number of votes for the best opponent of $c^*$ before the manipulation, we can evaluate the expected $\Delta MoV$ with an error smaller than $\epsilon$, with probability at least $1-2\lambda^2$
\end{corollary}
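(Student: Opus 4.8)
The plan is to derive the corollary directly from Proposition~\ref{proposition_boundedvariablesadditiveapproximation}, applied to the samples produced by the $\Delta MoV$ estimator. Fix the seed set $S$ whose expected margin increase we want to estimate. One simulation draws a live graph $y$ (including each edge $e$ independently with probability $p(e)$, exactly as in the live-graph formalism used in the proof of Theorem~\ref{th:apx}), runs the Independent Cascade from $S$ on it, updates beliefs, recomputes the votes, and records $X = \Delta MoV(S,y)$. Running $T$ independent simulations therefore produces i.i.d.\ random variables $X_1,\dots,X_T$ with common mean $\mu = \EXPECTED_y[\Delta MoV(S,y)] = \EXPECTED[\Delta MoV(S)]$, so Proposition~\ref{proposition_boundedvariablesadditiveapproximation} applies verbatim: once we exhibit constants $a,b$ with $a \le X_i \le b$ almost surely, $T \ge ((b-a)/\epsilon)^2\ln(1/\lambda)$ simulations make the empirical average approximate $\mu$ within additive error $\epsilon$ with probability at least $1-2\lambda^2$.

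The only real work is to show $0 \le \Delta MoV(S,y) \le |V| - |V_{c^*}| + |V_{\bar c}|$ for every live graph $y$, i.e.\ that we may take $a=0$ and $b = |V| - |V_{c^*}| + |V_{\bar c}|$. Since $\bar c$ is by definition the candidate with $|V_{\bar c}| = \max_{c\neq c^*}|V_c|$, the pre-manipulation advantage equals $|V_{c^*}| - |V_{\bar c}|$, and we can rewrite $\Delta MoV(S,y) = \bigl(|V^*_{c^*}| - |V_{c^*}|\bigr) + \bigl(|V_{\bar c}| - \max_{c\neq c^*}|V^*_c|\bigr)$. For the upper bound, a message that only pushes affected voters towards $x^v_{c^*}$ can never move a current $c^*$-voter away from $c^*$ (if $c^*$ is the nearest candidate to $x_v$ in $v$'s view, it remains nearest along the whole segment from $x_v$ towards $x^v_{c^*}$), so $|V^*_{c^*}| \ge |V_{c^*}|$ and hence $|V^*_{c^*}| - |V_{c^*}| \le |V| - |V_{c^*}|$; moreover $|V_{\bar c}| - \max_{c\neq c^*}|V^*_c| \le |V_{\bar c}|$ because vote counts are non-negative. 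Adding the two bounds gives $b$, and this choice cannot be improved in general, as it is attained in the extreme case where the campaign converts the whole electorate to $c^*$. The lower bound $\Delta MoV(S,y) \ge 0$ is precisely the non-negativity of the change of margin of victory already noted in Section~\ref{sec:model}.

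Substituting $b-a = |V| - |V_{c^*}| + |V_{\bar c}|$ into Proposition~\ref{proposition_boundedvariablesadditiveapproximation} yields exactly the stated sample size and confidence, which completes the proof. I expect the range estimate to be the only delicate step: it has to be phrased through the two separate quantities $|V^*_{c^*}| - |V_{c^*}|$ and $|V_{\bar c}| - \max_{c\neq c^*}|V^*_c|$ rather than by tracking a single opponent, because after the manipulation the best opponent of $c^*$ need not be $\bar c$; and the lower endpoint $a=0$ leans on the monotonicity $|V^*_{c^*}| \ge |V_{c^*}|$ together with the non-negativity of $\Delta MoV$, which is what keeps the range as tight as the corollary claims.
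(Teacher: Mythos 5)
Your proposal is correct and follows essentially the same route as the paper: both apply Proposition~\ref{proposition_boundedvariablesadditiveapproximation} to i.i.d.\ simulation outcomes of $\Delta MoV$ with the range $a=0$, $b=|V|-|V_{c^*}|+|V_{\bar c}|$, the paper justifying the range by the extreme cases (no change vs.\ unanimity for $c^*$) and you by a slightly more explicit term-by-term bound, which amounts to the same argument.
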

\begin{proof}
Proposition \ref{proposition_boundedvariablesadditiveapproximation} can be used to compute the number of simulations needed to approximate the statistical mean of $\Delta MoV$. The definition of the increment of the margin of victory of the target candidate $c^*$ is reported here for convenience:
\begin{equation}
\label{eq_MoV_definition}
    \Delta MoV = |V_{c^*}^*| - |V_{\hat c}^*| - \left( |V_{c^*}| - |V_{\bar c}| \right)
\end{equation}
where
\begin{itemize}
\item $|V_{c^*}^*|$ is the number of votes for $c^*$ after the manipulation;
\item $|V_{\hat c}^*|$ is the number of votes for the best opponent of $c^*$ after the manipulation;
\item $|V_{c^*}| $ is the number of votes for $c^*$ before the manipulation; it does not depend on the manipulation algorithm;
\item $|V_{\bar c}|$ is the number of votes for the best opponent of $c^*$ before the manipulation; it does not depend on the manipulation algorithm.
\end{itemize}

The worst result of the manipulation algorithm is that the MoV does not change at all since supporters of $c^*$ cannot negatively change their idea. The best possible result of the manipulation algorithm is $|V_{c^*}^*|=|V|$ and $|V_{\hat c}^*|=0$, i.e., the whole electorate supports $c^*$. Hence, in a generic simulation of the manipulation, $\Delta MoV$ is bounded

\begin{equation}
    0 \leq \Delta MoV(\cdot) \leq |V| - |V_{c^*}| + |V_{\bar c}| \nonumber
\end{equation}

So, $E[\Delta MoV(\cdot)]$ can be estimated by averaging the results of repeated simulations, and the number of such simulations can be computed by using Proposition \ref{proposition_boundedvariablesadditiveapproximation} with $a=0$ and $b=|V| - |V_{c^*}| + |V_{\bar c}|$, for some fixed values of $\epsilon$ and $\lambda$.\qed
\end{proof}
Hence, the number of simulation run for estimating $\Delta MoV$ has been set as suggested by Corollary~\ref{cor:simulDmov} with $\lambda = \sqrt\frac{0.1}{2}$ and $\varepsilon = 0.05$.

All the variants of the models and algorithms described above have been tested against both synthetic and real-world networks.
Specifically, we used:
(i) Watts-Strogatz graphs \cite{watts1998collective}
with
    nodes uniformly distributed in the 2D square whose side is $\sqrt \frac{|V|}{20}$ (thus, the density of the nodes remains the same increasing the size of the electorate);
    strong ties with a radius $r=0.13$;
    $k=2$ weak ties
distributed inversely to the distance with a power law of exponent
    $q=2$.
Since the density does not change when the number of nodes increases, the degree of each node remains almost the same.
(ii) Preferential attachment graphs \cite{barabasi1999emergence}, created by setting the probability of linking preferentially to $0.25$ and $0.75$.

For these networks, tests were performed on several combinations of parameters. Specifically, for each pair $(|V|, B/|V|) \in \{20, 50, 100\} \times \{0.05, 0.10, 0.15\}$, the simulated scenarios involved (in all the possible combinations):
    8 random placements of voters and candidates on the political spectrum;
    10 randomly generated graphs (Watts-Strogatz or preferential attachment models);
     10 randomly generated sets of diffusion probabilities on edges.
This led to $8 \times 10 \times 10 = 800$ electoral scenarios.
%
Due to the large running time of our approximation algorithm, it has been tested only with:
    31 random placements of voters and candidates;
    5 randomly generated graphs (Watts-Strogatz models only);
    5 randomly generated sets of diffusion probabilities.
This led to 775 random elections.

The real case study involves a snapshot of the Facebook social network \cite{leskovec2012learning} available at SNAP \cite{leskovec2016snap}.
The network consists of 4039 nodes and 88234 undirected edges.
It was adapted to the purposes of this work by associating each node with a random position on the political spectrum and assigning a random diffusion probability to each edge. However, generated data were not totally random: the testing phase considered the structure of the graph to create plausible data. The net was first partitioned using the Louvain Method, a greedy algorithm to detect the communities the net consists of. Other approaches were tested, too; however, the Louvain algorithm returned the best partition. By measuring some standard quality metrics, the Louvain partitions obtained:
\begin{itemize}
    \item modularity 0.83. It lies in $[-1/2, +1]$ and represents the concentration of edges within the communities compared to random distributions of the edges; the higher the value, the better the partition;
    \item coverage 0.96; it is the ratio of the number of intra-community edges to the total number of edges of the graph;
    \item performance 0.92; it is the number of intra-community edges plus inter-community non-edges divided by the total number of potential edges. The higher the value, the better the partition.
\end{itemize}
The algorithm returned 16 communities (the distribution of the nodes is shown in Table \ref{tab:louvain_communitites}; Figure \ref{fig:facebook_louvain_communities} also reports a pictorial representation of the communities).
\begin{table}[H]
\centering
\begin{tabular}{ll|ll}
Community ID & Number of nodes & Community ID & Number of nodes \\ \hline
0            & 19              & 8            & 237             \\
1            & 19              & 9            & 323             \\
2            & 25              & 10           & 350             \\
3            & 60              & 11           & 423             \\
4            & 73              & 12           & 432             \\
5            & 117             & 13           & 446             \\
6            & 206             & 14           & 535             \\
7            & 226             & 15           & 548
\end{tabular}
\caption{Sizes of the communities detected by the Louvain method on the Facebook network.}
\label{tab:louvain_communitites}
\end{table}
\begin{figure}[H]
    \centering
    \includegraphics[width=0.8\textwidth]{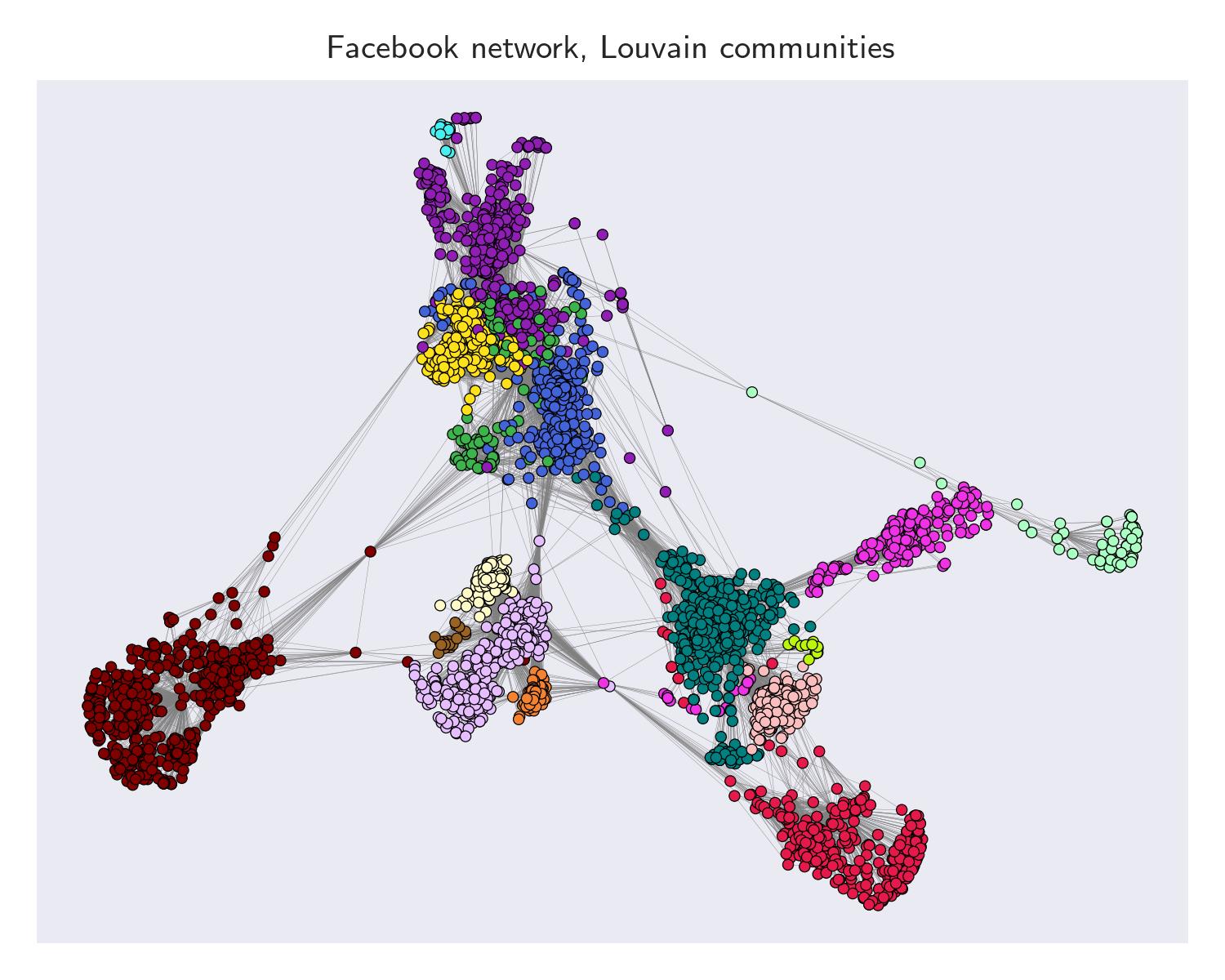}
    \caption{The Facebook network used for the experiments. Positions of the nodes were computed by using the Fruchterman-Reingold algorithm. The colours of the nodes represent the (Louvain) community they belong to.}
\label{fig:facebook_louvain_communities}
\end{figure}
Once the graph had been partitioned, communities were used to create reasonable diffusion probabilities of the edges. In particular, edges connecting nodes in the same group were assigned probabilities chosen uniformly at random in the range $[0.6, 1]$; edges connecting nodes in different communities were assigned probabilities chosen uniformly at random in the range $[0, 0.4]$. This should reflect the fact that people in the same circle of friends are more likely to share ideas, whereas nodes in distinct communities are less likely to do that.

The political parties of the experiment were equally spaced on the political spectrum. In particular, the candidates are (sorted from left to right) $c_0, c_1, c_2, c_3,$ and $c_4$, from $-1$ to $+1$. The target candidate is $c_2$. To test algorithm \emph{SPpagerank1.0\_pos} in the worst case for the target candidate, voters were placed on the spectrum as follows:
\begin{itemize}
    \item voters in the communities $0,1,2,3,4,5,6$ (i.e., the smallest ones) were placed uniformly at random in the range $[-0.25, +0.25]$. These voters vote for $c_2$.
    \item the other voters were placed uniformly at random in the ranges $[-1, -0.25]$ and $[0.25, 1]$. These voters do not vote for $c_2$.
\end{itemize}
Trying to impersonate the manipulator facing a real election manipulation problem, the experiment involves a single set of random diffusion probabilities and a single random electorate. Three noises were tested: $\NOISE=0$, $\NOISE=\mathcal{N}(0;1)$, and $\NOISE=\mathcal{N}(0;0.08)$. $\delta$ was set to $0.1$ and $0.3$; the budget was set to 5\% of the nodes. The test assumed that the manipulator was under the \emph{limited-knowledge} hypothesis. Since the graph is quite large, like the simulations of the approximation algorithm, the test was repeated 300 times, guaranteeing an error of $0.5\sqrt 2$ of the maximum $\Delta MoV$ with probability 90\% on the estimation of $\EXPECTED[\Delta MoV]$.

Among all the tested algorithms, the best one (considering both manipulation performances and computational complexity) was tested in terms of scalability on very large graphs.
%
The experiment is executed on Watts-Strogatz graphs built as explained above. The tested number of nodes of the graphs are $[200, 500, 1000, 2000, 5000, 10000, 20000]$. The algorithm is allowed to run for three hours for each graph size, repeatedly generating random graphs and elections. This experiment was repeated for $\NOISE=0$, $\NOISE=\mathcal{N}(0;0.08)$, $\NOISE=\mathcal{N}(0;1)$.

%

\paragraph{Results.}
%
We start by comparing the different heuristics on Watts-Strogatz networks, in order to find the one that guarantees the best performances.
%
In Figure \ref{fig:results_wattsstrogatz_LK_STUDYING_STEP_2} we show the performances of a subset of some of the heuristics that we designed.
%
\begin{figure}
    \centering
    \includegraphics[width=0.75\textwidth]{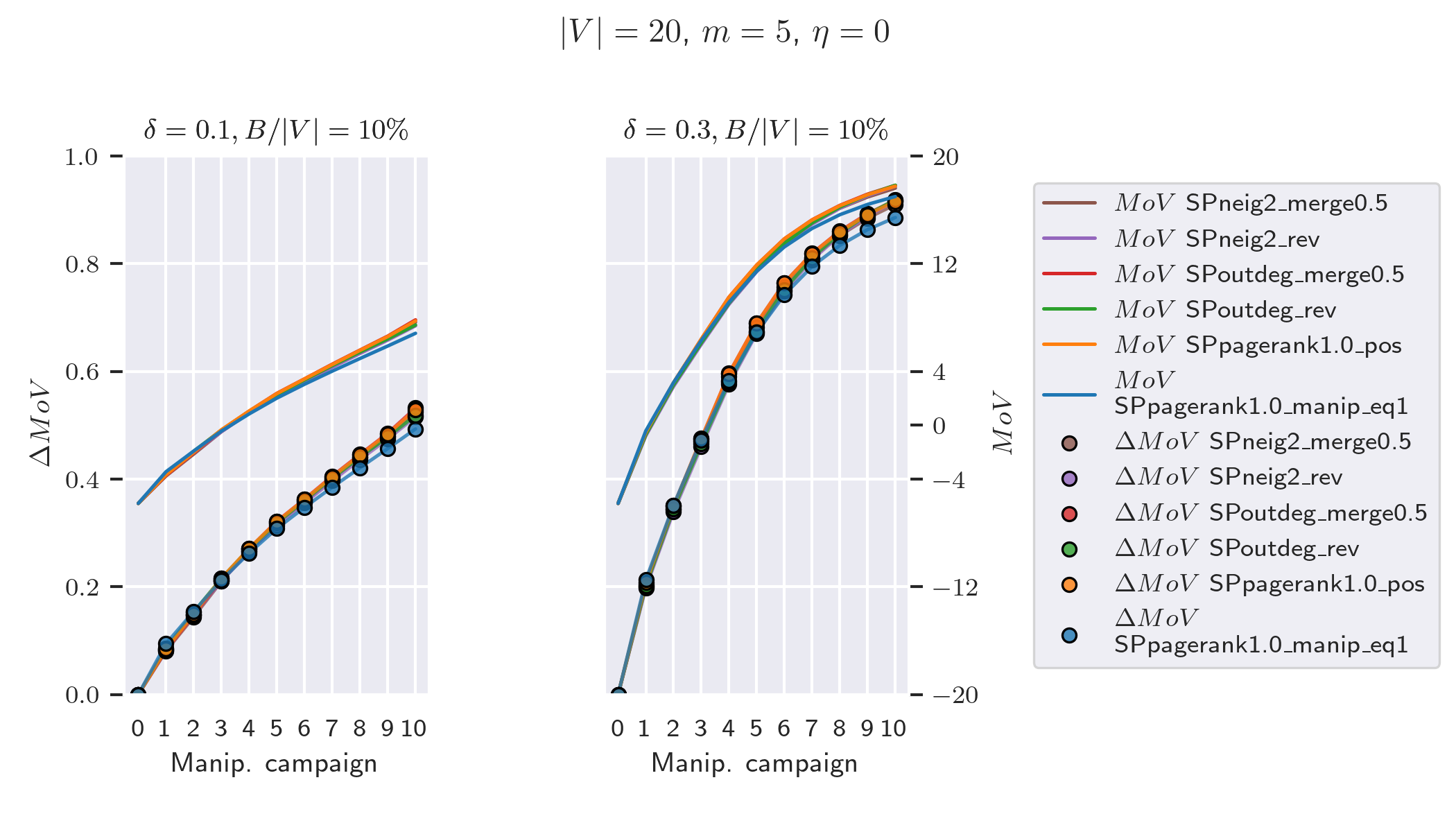}
    \caption{Performances of the algorithms.}
\label{fig:results_wattsstrogatz_LK_STUDYING_STEP_2}
\end{figure}
\begin{table}[htb]
\centering
\resizebox{\textwidth}{!}{
\begin{tabular}{lllllllllllll}
\toprule
             &     &     &       Round 1 &       Round 2 &       Round 3 &       Round 4 &       Round 5 &       Round 6 &       Round 7 &       Round 8 &       Round 9 &      Round 10 \\
Alg & $\delta$ & B &               &               &               &               &               &               &               &               &               &               \\
\midrule
$PR1^+$ & 0.1 & 5\% &  0.074$\pm$0.053 &  0.133$\pm$0.068 &  0.190$\pm$0.090 &  0.240$\pm$0.112 &  0.286$\pm$0.127 &  0.325$\pm$0.135 &  0.362$\pm$0.142 &  0.399$\pm$0.149 &  0.434$\pm$0.157 &  0.472$\pm$0.162 \\
             &     & 10\% &  0.084$\pm$0.057 &  0.149$\pm$0.071 &  0.216$\pm$0.097 &  0.271$\pm$0.122 &  0.321$\pm$0.135 &  0.362$\pm$0.141 &  0.404$\pm$0.145 &  0.445$\pm$0.152 &  0.484$\pm$0.161 &  0.529$\pm$0.164 \\
             &     & 15\% &  0.091$\pm$0.059 &  0.161$\pm$0.073 &  0.233$\pm$0.104 &  0.291$\pm$0.130 &  0.344$\pm$0.143 &  0.387$\pm$0.144 &  0.433$\pm$0.148 &  0.476$\pm$0.156 &  0.519$\pm$0.167 &  0.570$\pm$0.168 \\
             & 0.3 & 5\% &  0.182$\pm$0.094 &  0.310$\pm$0.124 &  0.421$\pm$0.153 &  0.528$\pm$0.160 &  0.616$\pm$0.160 &  0.689$\pm$0.154 &  0.745$\pm$0.146 &  0.788$\pm$0.136 &  0.823$\pm$0.126 &  0.851$\pm$0.116 \\
             &     & 10\% &  0.208$\pm$0.102 &  0.351$\pm$0.128 &  0.475$\pm$0.156 &  0.596$\pm$0.153 &  0.690$\pm$0.146 &  0.765$\pm$0.132 &  0.819$\pm$0.119 &  0.860$\pm$0.107 &  0.891$\pm$0.095 &  0.916$\pm$0.085 \\
             &     & 15\% &  0.226$\pm$0.108 &  0.377$\pm$0.131 &  0.511$\pm$0.162 &  0.644$\pm$0.150 &  0.740$\pm$0.138 &  0.816$\pm$0.116 &  0.867$\pm$0.099 &  0.905$\pm$0.085 &  0.932$\pm$0.072 &  0.952$\pm$0.061 \\
$N2^{rev}$ & 0.1 & 5\% &  0.076$\pm$0.052 &  0.135$\pm$0.066 &  0.192$\pm$0.086 &  0.242$\pm$0.107 &  0.288$\pm$0.123 &  0.328$\pm$0.133 &  0.365$\pm$0.141 &  0.402$\pm$0.149 &  0.437$\pm$0.156 &  0.473$\pm$0.162 \\
             &     & 10\% &  0.083$\pm$0.056 &  0.148$\pm$0.070 &  0.211$\pm$0.092 &  0.266$\pm$0.116 &  0.314$\pm$0.132 &  0.356$\pm$0.141 &  0.397$\pm$0.148 &  0.437$\pm$0.156 &  0.475$\pm$0.163 &  0.516$\pm$0.168 \\
             &     & 15\% &  0.089$\pm$0.058 &  0.157$\pm$0.072 &  0.225$\pm$0.097 &  0.282$\pm$0.123 &  0.333$\pm$0.138 &  0.376$\pm$0.145 &  0.420$\pm$0.152 &  0.462$\pm$0.160 &  0.502$\pm$0.168 &  0.548$\pm$0.172 \\
             & 0.3 & 5\% &  0.181$\pm$0.092 &  0.309$\pm$0.122 &  0.419$\pm$0.149 &  0.523$\pm$0.160 &  0.612$\pm$0.161 &  0.687$\pm$0.156 &  0.748$\pm$0.148 &  0.797$\pm$0.138 &  0.836$\pm$0.127 &  0.867$\pm$0.116 \\
             &     & 10\% &  0.198$\pm$0.098 &  0.340$\pm$0.126 &  0.460$\pm$0.154 &  0.576$\pm$0.160 &  0.671$\pm$0.157 &  0.749$\pm$0.146 &  0.808$\pm$0.132 &  0.853$\pm$0.117 &  0.889$\pm$0.103 &  0.916$\pm$0.089 \\
             &     & 15\% &  0.212$\pm$0.103 &  0.362$\pm$0.129 &  0.489$\pm$0.159 &  0.614$\pm$0.159 &  0.711$\pm$0.151 &  0.789$\pm$0.134 &  0.845$\pm$0.116 &  0.887$\pm$0.099 &  0.920$\pm$0.084 &  0.944$\pm$0.069 \\
$deg^{rev}$ & 0.1 & 5\% &  0.076$\pm$0.052 &  0.136$\pm$0.065 &  0.191$\pm$0.084 &  0.240$\pm$0.104 &  0.285$\pm$0.120 &  0.324$\pm$0.131 &  0.361$\pm$0.140 &  0.396$\pm$0.148 &  0.430$\pm$0.154 &  0.465$\pm$0.160 \\
             &     & 10\% &  0.086$\pm$0.056 &  0.153$\pm$0.069 &  0.216$\pm$0.093 &  0.271$\pm$0.116 &  0.318$\pm$0.132 &  0.360$\pm$0.142 &  0.402$\pm$0.148 &  0.441$\pm$0.155 &  0.478$\pm$0.162 &  0.518$\pm$0.167 \\
             &     & 15\% &  0.094$\pm$0.059 &  0.166$\pm$0.071 &  0.235$\pm$0.099 &  0.293$\pm$0.125 &  0.344$\pm$0.141 &  0.389$\pm$0.147 &  0.434$\pm$0.152 &  0.476$\pm$0.159 &  0.516$\pm$0.167 &  0.562$\pm$0.170 \\
             & 0.3 & 5\% &  0.179$\pm$0.091 &  0.303$\pm$0.121 &  0.411$\pm$0.146 &  0.512$\pm$0.158 &  0.601$\pm$0.161 &  0.677$\pm$0.158 &  0.739$\pm$0.150 &  0.790$\pm$0.140 &  0.830$\pm$0.129 &  0.863$\pm$0.118 \\
             &     & 10\% &  0.202$\pm$0.099 &  0.345$\pm$0.126 &  0.465$\pm$0.152 &  0.580$\pm$0.159 &  0.675$\pm$0.155 &  0.752$\pm$0.144 &  0.811$\pm$0.129 &  0.856$\pm$0.114 &  0.892$\pm$0.100 &  0.919$\pm$0.086 \\
             &     & 15\% &  0.222$\pm$0.104 &  0.376$\pm$0.129 &  0.505$\pm$0.158 &  0.630$\pm$0.157 &  0.728$\pm$0.146 &  0.805$\pm$0.127 &  0.860$\pm$0.108 &  0.901$\pm$0.090 &  0.931$\pm$0.075 &  0.953$\pm$0.061 \\
$PR^{manip}$ & 0.1 & 5\% &  0.080$\pm$0.050 &  0.134$\pm$0.067 &  0.184$\pm$0.089 &  0.228$\pm$0.110 &  0.268$\pm$0.127 &  0.303$\pm$0.138 &  0.335$\pm$0.145 &  0.366$\pm$0.151 &  0.395$\pm$0.158 &  0.425$\pm$0.164 \\
                           &     & 10\% &  0.094$\pm$0.054 &  0.154$\pm$0.069 &  0.212$\pm$0.095 &  0.263$\pm$0.118 &  0.308$\pm$0.134 &  0.348$\pm$0.141 &  0.385$\pm$0.146 &  0.421$\pm$0.151 &  0.457$\pm$0.158 &  0.493$\pm$0.164 \\
                           &     & 15\% &  0.102$\pm$0.057 &  0.167$\pm$0.072 &  0.231$\pm$0.101 &  0.286$\pm$0.125 &  0.335$\pm$0.138 &  0.377$\pm$0.143 &  0.418$\pm$0.146 &  0.458$\pm$0.152 &  0.498$\pm$0.160 &  0.540$\pm$0.165 \\
                           & 0.3 & 5\% &  0.183$\pm$0.094 &  0.306$\pm$0.126 &  0.414$\pm$0.153 &  0.512$\pm$0.165 &  0.596$\pm$0.168 &  0.665$\pm$0.164 &  0.720$\pm$0.156 &  0.763$\pm$0.147 &  0.797$\pm$0.137 &  0.824$\pm$0.127 \\
                           &     & 10\% &  0.213$\pm$0.101 &  0.352$\pm$0.130 &  0.473$\pm$0.156 &  0.583$\pm$0.163 &  0.672$\pm$0.158 &  0.743$\pm$0.147 &  0.795$\pm$0.134 &  0.834$\pm$0.121 &  0.863$\pm$0.108 &  0.886$\pm$0.097 \\
                           &     & 15\% &  0.233$\pm$0.107 &  0.381$\pm$0.132 &  0.511$\pm$0.158 &  0.632$\pm$0.157 &  0.724$\pm$0.146 &  0.793$\pm$0.129 &  0.841$\pm$0.112 &  0.875$\pm$0.098 &  0.900$\pm$0.085 &  0.919$\pm$0.075 \\
\bottomrule
\end{tabular}
}
\caption{Results about the normalized $\Delta MoV$ of the algorithms under the \emph{limited-knowledge} hypothesis for 20 voters, $\NOISE=0$, 5 candidates. Each cell shows the average performance and the standard deviation. $PR1^+$, $N2^{rev}$, $deg^{rev}$, and $PR^{manip}$ respectively stand for \emph{SPpagerank1.0\_pos}, \emph{SPneig2\_rev}, \emph{SPoutdeg\_rev}, and \emph{SPpagerank1.0\_manip\_eq1}.}
\label{tab:results_wattsstrogatz_LK_STUDYING_STEP_2}
\end{table}
We can check (see also numerical results in Table \ref{tab:results_wattsstrogatz_LK_STUDYING_STEP_2}) that the algorithm with the best performances is \emph{SPpagerank1.0\_pos}, because it considers the whole network, while almost every other ones focus on local properties of the graph.
%
To stress this aspect, we next provide further comparisons between PageRank based heuristics.

Let us compare \emph{SPpagerank1.0\_pos} with other PageRank based heuristics. In particular, we will focus on \emph{SPpagerank1.0\_manip\_eq1}, that somehow uses the same weights as the ones defined in the greedy algorithm proposed in Section~\ref{sec:greedy}, i.e., considers only nodes that will vote for $c^*$ if manipulated and do not already vote for $c^*$. One may hope that \emph{SPpagerank1.0\_manip\_eq1} should inherit some good properties of the approximation algorithm and outperform \emph{SPpagerank1.0\_pos}. Our results (see Table~\ref{tab:results_wattsstrogatz_LK_STUDYING_STEP_2}) show that \emph{SPpagerank1.0\_manip\_eq1} actually achieves better performances in the first manipulation campaigns; after a few manipulations, \emph{SPpagerank1.0\_pos} achieves higher $\Delta MoV$s.
This means that if the manipulator has a low budget, then he should use \emph{SPpagerank1.0\_manip\_eq1} to achieve the best results (at least considering only the heuristics analyzed so far). Moreover, the additional cost required to estimate if a voter is manipulable does not increase the computational complexity concerning \emph{SPpagerank1.0\_pos}.

The fact that \emph{SPpagerank1.0\_manip\_eq1} performs better only for the first rounds suggests that it is ``too good'' at hiring the best influencers as soon as possible, but this leads to conditions in which the manipulation problem is harder to solve. The problem is that the algorithm is excessively eager to increase the $\Delta MoV$ in the first campaigns and does not consider the possibility of influencing the electorate in a higher number of campaigns. Algorithm \emph{SPpagerank1.0\_manip*\_pos} was specifically designed to overcome this issue. In fact, when computing the distance function (as shown in the definition of the algorithm), the weights of non-manipulable voters consider how much the voter will get closer to $c^*$ (on the political axis) if influenced; this property should give the algorithm the capability of foreseeing the manipulated electorate at the following manipulation campaign.
Of course, these are only intuitive conjectures. However, since the algorithms are based on heuristics, there is no way to formally prove the efficacy of these intuitions.
The comparison of the algorithms \emph{SPpagerank1.0\_pos}, \emph{SPpagerank1.0\_manip\_eq1}, and \emph{SPpagerank1.0\_manip*\_pos}  is shown in Figure \ref{fig:results_wattsstrogatz_LK_STUDYING_STEP_6}.
\begin{figure}
    \centering
    \includegraphics[width=0.9\textwidth]{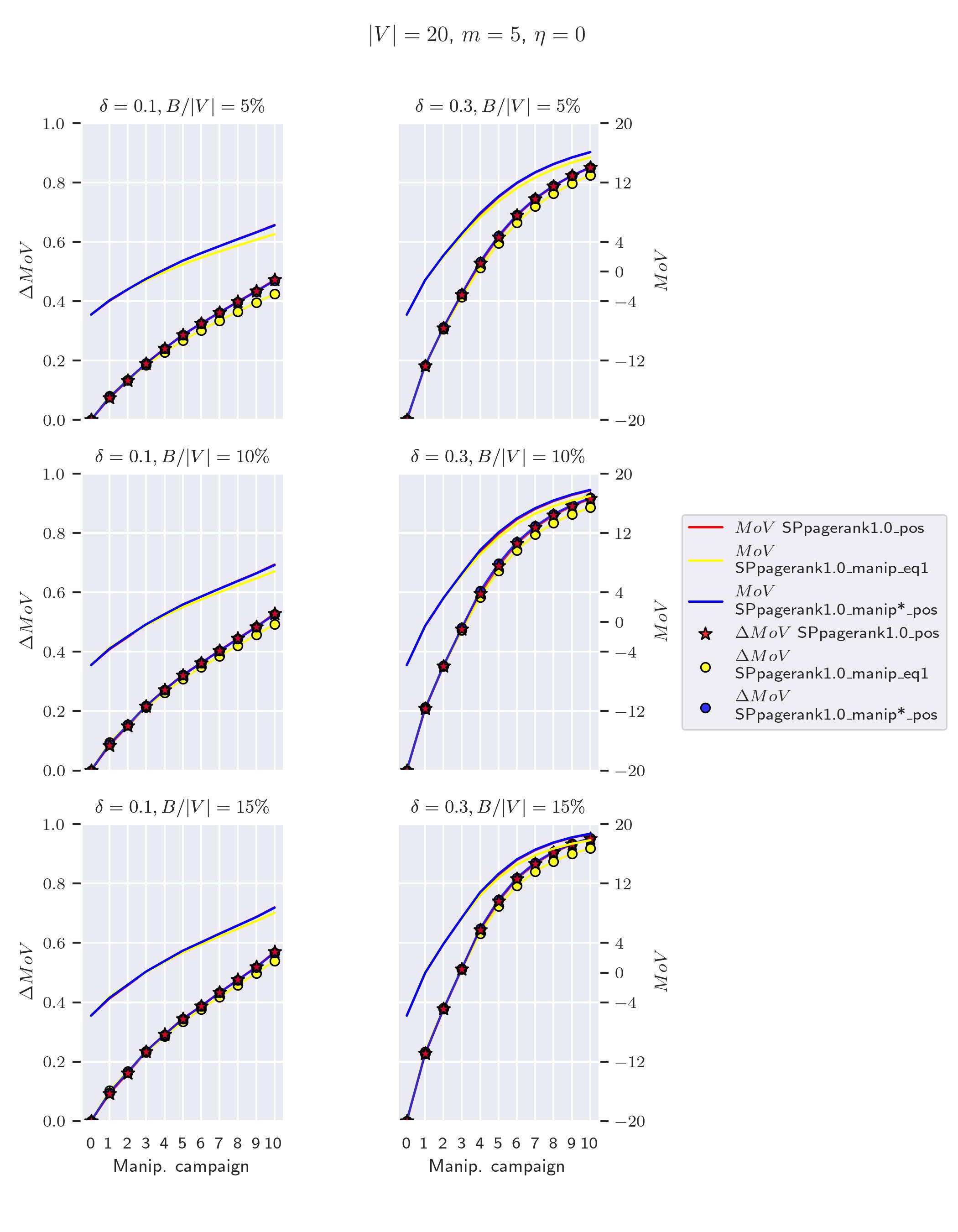}
    \caption{Performances of \emph{SPpagerank1.0\_pos} against \emph{SPpagerank1.0\_manip\_eq1} and \emph{SPpagerank1.0\_manip*\_pos} with 20 voters, $\NOISE=0$, and 5 candidates. Each plot shows both the $\Delta MoV$ and the $MoV$.}
\label{fig:results_wattsstrogatz_LK_STUDYING_STEP_6}
\end{figure}
Results show that \emph{SPpagerank1.0\_manip*\_pos} performs better than \emph{SPpagerank1.0\_manip\_eq1}. Anyway, it does not outperform \emph{SPpagerank1.0\_pos}. Since \emph{SPpagerank1.0\_manip*\_pos} is computationally more expensive, we conclude that the best algorithm using fast heuristics in the perfectly single-peaked scenario is \emph{SPpagerank1.0\_pos}.

We next compare the best heuristic method and the approximation algorithm.
Since the approximation algorithm is computationally heavy (see below), the tests only use the following subset of parameters:
$\delta \in \{0.1, 0.3\}$ and
$B/|V| \in \{5\%, 10\%\}$.
Moreover, in the 75\% of the considered instances the approximation algorithm is guaranteed to achieve a constant approximation (i.e., in these instances no candidate is more advantaged than the target candidate by messages in favour of the latter).
Performances are graphically shown in Figure~\ref{fig:results_wattsstrogatz_LK_STUDYING_STEP_9} (see also Table~\ref{tab:results_wattsstrogatz_LK_STUDYING_STEP_9}).
\begin{figure}
    \centering
    \includegraphics[width=0.75\textwidth]{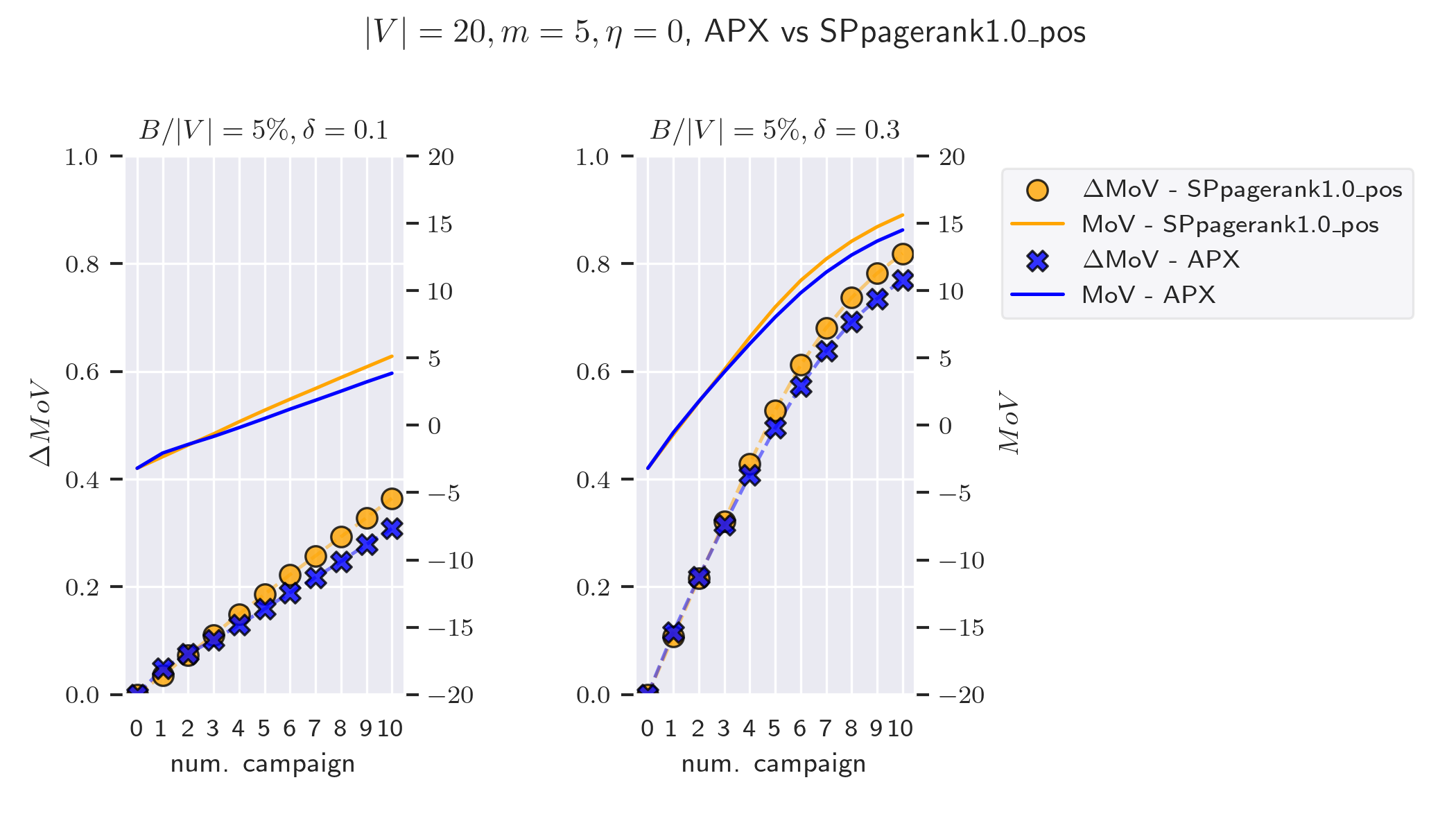}
    \caption{Performances of \emph{SPpagerank1.0\_pos} and of the approximation algorithm.}
\label{fig:results_wattsstrogatz_LK_STUDYING_STEP_9}
\end{figure}
\begin{table}[H]
\centering
\resizebox{\textwidth}{!}{
\begin{tabular}{lllllllllllll}
\toprule
                  &     &     &       Round 1 &       Round 2 &       Round 3 &       Round 4 &       Round 5 &       Round 6 &       Round 7 &       Round 8 &       Round 9 &      Round 10 \\
Alg & $\delta$ & B &               &               &               &               &               &               &               &               &               &               \\
\midrule
APX & 0.1 & 5\% &  0.048$\pm$0.043 &  0.076$\pm$0.053 &  0.101$\pm$0.059 &  0.129$\pm$0.068 &  0.159$\pm$0.078 &  0.189$\pm$0.091 &  0.217$\pm$0.105 &  0.247$\pm$0.116 &  0.279$\pm$0.124 &  0.308$\pm$0.135 \\
                  &     & 10\% &  0.051$\pm$0.058 &  0.088$\pm$0.063 &  0.123$\pm$0.065 &  0.163$\pm$0.081 &  0.201$\pm$0.094 &  0.245$\pm$0.102 &  0.282$\pm$0.118 &  0.321$\pm$0.129 &  0.361$\pm$0.141 &  0.402$\pm$0.151 \\
                  & 0.3 & 5\% &  0.116$\pm$0.050 &  0.218$\pm$0.084 &  0.315$\pm$0.111 &  0.408$\pm$0.133 &  0.495$\pm$0.146 &  0.573$\pm$0.151 &  0.638$\pm$0.151 &  0.692$\pm$0.145 &  0.735$\pm$0.138 &  0.770$\pm$0.131 \\
                  &     & 10\% &  0.155$\pm$0.058 &  0.283$\pm$0.099 &  0.403$\pm$0.127 &  0.522$\pm$0.143 &  0.628$\pm$0.141 &  0.714$\pm$0.131 &  0.777$\pm$0.119 &  0.821$\pm$0.107 &  0.852$\pm$0.097 &  0.875$\pm$0.090 \\
$PR1^+$ & 0.1 & 5\% &  0.036$\pm$0.047 &  0.073$\pm$0.057 &  0.111$\pm$0.060 &  0.149$\pm$0.071 &  0.186$\pm$0.084 &  0.223$\pm$0.096 &  0.257$\pm$0.107 &  0.293$\pm$0.116 &  0.328$\pm$0.124 &  0.364$\pm$0.133 \\
                  &     & 10\% &  0.040$\pm$0.051 &  0.084$\pm$0.058 &  0.127$\pm$0.063 &  0.172$\pm$0.075 &  0.214$\pm$0.092 &  0.255$\pm$0.104 &  0.295$\pm$0.116 &  0.337$\pm$0.125 &  0.378$\pm$0.133 &  0.420$\pm$0.142 \\
                  & 0.3 & 5\% &  0.108$\pm$0.055 &  0.215$\pm$0.092 &  0.321$\pm$0.117 &  0.429$\pm$0.136 &  0.528$\pm$0.144 &  0.612$\pm$0.145 &  0.681$\pm$0.140 &  0.737$\pm$0.133 &  0.782$\pm$0.124 &  0.819$\pm$0.114 \\
                  &     & 10\% &  0.124$\pm$0.061 &  0.250$\pm$0.103 &  0.374$\pm$0.126 &  0.497$\pm$0.139 &  0.607$\pm$0.139 &  0.698$\pm$0.133 &  0.768$\pm$0.123 &  0.822$\pm$0.112 &  0.864$\pm$0.100 &  0.895$\pm$0.089 \\
\bottomrule
\end{tabular}
}
\caption{Results about the normalized $\Delta MoV$ of the algorithms under the \emph{limited-knowledge} hypothesis for 20 voters, $\NOISE=0$, 5 candidates. Each cell shows the average performance and the standard deviation. $PR1^+$ and $APX$ respectively stand for \emph{SPpagerank1.0\_pos} and \emph{Approximation algorithm}.}
\label{tab:results_wattsstrogatz_LK_STUDYING_STEP_9}
\end{table}
Results show that the approximation algorithm actually performs better than \emph{SPpagerank1.0\_pos} only in the initial campaigns.
For a higher number of rounds, \emph{SPpagerank1.0\_pos} performs better, and on average, after ten campaigns, it gets approximately more votes for the target candidate
(see Table \ref{tab:apx_vs_SPpagerank1pos} for a more detailed comparison).
\begin{table}[H]
\centering
\resizebox{\textwidth}{!}{
\begin{tabular}{lllllllllllll}
\toprule
    &     &                   & Round 1 &  R. 2 &  R. 3 &  R. 4 &  R. 5 &  R. 6 &  R. 7 &  R. 8 &  R. 9 & R. 10 \\
B & $\delta$ & {} &         &       &       &       &       &       &       &       &       &       \\
\midrule
5\% & 0.1 & \# $(APX > PR)\ \%$ &    62\ \% &  52\ \% &  39\ \% &  25\ \% &  21\ \% &  19\ \% &  19\ \% &  16\ \% &  16\ \% &  18\ \% \\
    &     & max $(APX-PR)\ \%$ &    10\ \% &  20\ \% &  11\ \% &  13\ \% &  11\ \% &  13\ \% &  13\ \% &  12\ \% &  16\ \% &  18\ \% \\
    &     & mean $(APX-PR)\ \%$ &     1\ \% &   0\ \% &   0\ \% &  -2\ \% &  -2\ \% &  -3\ \% &  -3\ \% &  -4\ \% &  -4\ \% &  -5\ \% \\
    & 0.3 & \# $(APX > PR)\ \%$ &    54\ \% &  53\ \% &  43\ \% &  29\ \% &  19\ \% &  12\ \% &   9\ \% &   8\ \% &   6\ \% &   6\ \% \\
    &     & max $(APX-PR)\ \%$ &    15\ \% &  16\ \% &  15\ \% &   9\ \% &   7\ \% &   6\ \% &   7\ \% &   7\ \% &   8\ \% &   9\ \% \\
    &     & mean $(APX-PR)\ \%$ &     0\ \% &   0\ \% &   0\ \% &  -2\ \% &  -3\ \% &  -3\ \% &  -4\ \% &  -4\ \% &  -4\ \% &  -4\ \% \\
10\% & 0.1 & \# $(APX > PR)\ \%$ &    57\ \% &  58\ \% &  51\ \% &  45\ \% &  43\ \% &  44\ \% &  45\ \% &  42\ \% &  45\ \% &  46\ \% \\
    &     & max $(APX-PR)\ \%$ &    15\ \% &  13\ \% &  11\ \% &  14\ \% &   9\ \% &  19\ \% &  18\ \% &  21\ \% &  24\ \% &  26\ \% \\
    &     & mean $(APX-PR)\ \%$ &     1\ \% &   0\ \% &   0\ \% &   0\ \% &  -1\ \% &  -1\ \% &  -1\ \% &  -1\ \% &  -1\ \% &  -1\ \% \\
    & 0.3 & \# $(APX > PR)\ \%$ &    83\ \% &  84\ \% &  80\ \% &  77\ \% &  74\ \% &  71\ \% &  64\ \% &  52\ \% &  36\ \% &  25\ \% \\
    &     & max $(APX-PR)\ \%$ &    23\ \% &  19\ \% &  19\ \% &  20\ \% &  15\ \% &  18\ \% &  16\ \% &  12\ \% &  11\ \% &  10\ \% \\
    &     & mean $(APX-PR)\ \%$ &     3\ \% &   3\ \% &   2\ \% &   2\ \% &   2\ \% &   1\ \% &   0\ \% &   0\ \% &  -1\ \% &  -2\ \% \\
\bottomrule
\end{tabular}
}
\caption{Detailed comparison of the approximation algorithm (named $APX$) and \emph{SPpagerank1.0\_pos} (named $PR$). The experiments are the same as in Table \ref{tab:results_wattsstrogatz_LK_STUDYING_STEP_9}. The first row of the table shows the number of times the approximation algorithm achieves better (normalized) $\Delta MoV$ than \emph{SPpagerank1.0\_pos}. The second and third rows show the maximum and average difference of the normalized $\Delta MoV$ between the approximation algorithm and the heuristic considering the 775 elections they were tested on.}
\label{tab:apx_vs_SPpagerank1pos}
\end{table}


Results in Figure \ref{fig:results_wattsstrogatz_LK_STUDYING_STEP_3} (and in Table \ref{tab:results_wattsstrogatz_LK_STUDYING_STEP_3})
show the performances of algorithm \emph{SPpagerank1.0\_pos} in nearly single-peaked scenarios.
Note that the blurred views of the voters make the initial MoVs of the simulations different from the initial MoVs in the perfectly single-peaked cases.
For this reason, comparisons based on $MoV$ cannot be made.
%
While it may appear that blurred views increase the performances,
this only depends on
the initial conditions of the simulated scenarios
being different.
However, this only happens when the noise has low variance: with $\NOISE=\mathcal{N}(0;1)$, performances are definitely worse than the single-peaked case (even if the initial MoV is higher). Nevertheless, even with
$\NOISE=\mathcal{N}(0;1)$ and $\NOISE=\mathcal{N}(0;0.08)$
that
are very strong noises,
performances are not that much worse than the ones in single-peaked scenarios.
For this reason, our heuristics
has been
tested also with specific noises not having the peak of the probability distribution function in 0, namely $\NOISE_2=\frac{1}{2}\mathcal{N}(-0.7;1)+\frac{1}{2}\mathcal{N}(+0.7;1)$.
The corresponding plots in Figure \ref{fig:results_wattsstrogatz_LK_STUDYING_STEP_3}
show
that performances slightly drop when the peak of the distribution of the noise is not 0.
%
\begin{figure}
    \centering
    \includegraphics[width=0.75\textwidth]{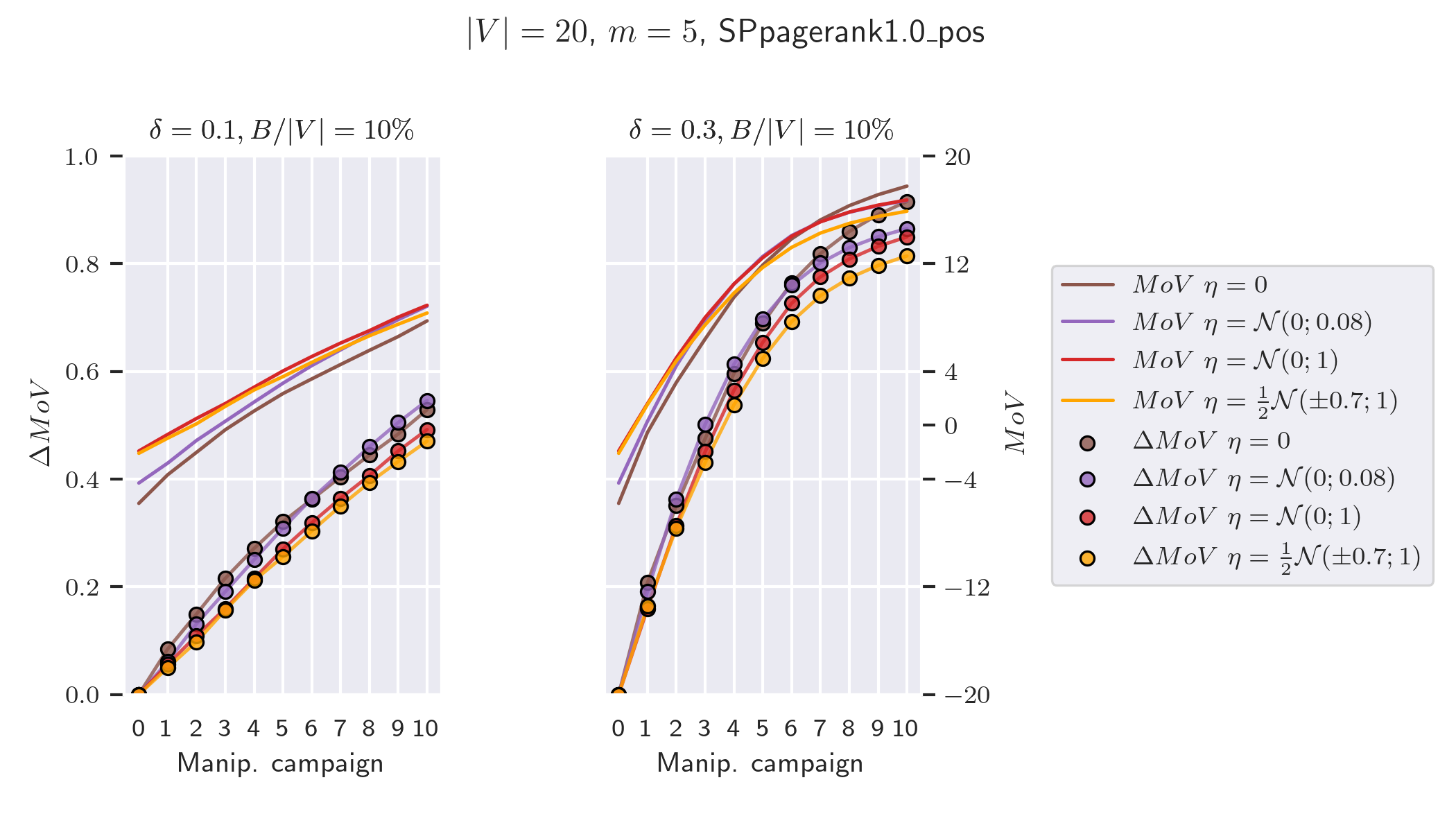}
    \caption{Performances of \emph{SPpagerank1.0\_pos} for $\NOISE=\mathcal{N}(0;0.08)$ and $\NOISE=\mathcal{N}(0;1)$ and $\NOISE=\frac{1}{2}\mathcal{N}(-0.7;1)+\frac{1}{2}\mathcal{N}(+0.7;1)$.}
\label{fig:results_wattsstrogatz_LK_STUDYING_STEP_3}
\end{figure}
\begin{table}[H]
\centering
\resizebox{\textwidth}{!}{
\begin{tabular}{lllllllllllll}
\toprule
                    &     &     &       Round 1 &       Round 2 &       Round 3 &       Round 4 &       Round 5 &       Round 6 &       Round 7 &       Round 8 &       Round 9 &      Round 10 \\
$\eta$ & $\delta$ & B &               &               &               &               &               &               &               &               &               &               \\
\midrule
0 & 0.1 & 5\% &  0.074$\pm$0.053 &  0.133$\pm$0.068 &  0.190$\pm$0.090 &  0.240$\pm$0.112 &  0.286$\pm$0.127 &  0.325$\pm$0.135 &  0.362$\pm$0.142 &  0.399$\pm$0.149 &  0.434$\pm$0.157 &  0.472$\pm$0.162 \\
                    &     & 10\% &  0.084$\pm$0.057 &  0.149$\pm$0.071 &  0.216$\pm$0.097 &  0.271$\pm$0.122 &  0.321$\pm$0.135 &  0.362$\pm$0.141 &  0.404$\pm$0.145 &  0.445$\pm$0.152 &  0.484$\pm$0.161 &  0.529$\pm$0.164 \\
                    &     & 15\% &  0.091$\pm$0.059 &  0.161$\pm$0.073 &  0.233$\pm$0.104 &  0.291$\pm$0.130 &  0.344$\pm$0.143 &  0.387$\pm$0.144 &  0.433$\pm$0.148 &  0.476$\pm$0.156 &  0.519$\pm$0.167 &  0.570$\pm$0.168 \\
                    & 0.3 & 5\% &  0.182$\pm$0.094 &  0.310$\pm$0.124 &  0.421$\pm$0.153 &  0.528$\pm$0.160 &  0.616$\pm$0.160 &  0.689$\pm$0.154 &  0.745$\pm$0.146 &  0.788$\pm$0.136 &  0.823$\pm$0.126 &  0.851$\pm$0.116 \\
                    &     & 10\% &  0.208$\pm$0.102 &  0.351$\pm$0.128 &  0.475$\pm$0.156 &  0.596$\pm$0.153 &  0.690$\pm$0.146 &  0.765$\pm$0.132 &  0.819$\pm$0.119 &  0.860$\pm$0.107 &  0.891$\pm$0.095 &  0.916$\pm$0.085 \\
                    &     & 15\% &  0.226$\pm$0.108 &  0.377$\pm$0.131 &  0.511$\pm$0.162 &  0.644$\pm$0.150 &  0.740$\pm$0.138 &  0.816$\pm$0.116 &  0.867$\pm$0.099 &  0.905$\pm$0.085 &  0.932$\pm$0.072 &  0.952$\pm$0.061 \\
$\mathcal{N}(0;1)$ & 0.1 & 5\% &  0.051$\pm$0.058 &  0.099$\pm$0.081 &  0.146$\pm$0.096 &  0.195$\pm$0.108 &  0.246$\pm$0.118 &  0.292$\pm$0.127 &  0.335$\pm$0.137 &  0.375$\pm$0.143 &  0.417$\pm$0.146 &  0.454$\pm$0.149 \\
                    &     & 10\% &  0.056$\pm$0.062 &  0.109$\pm$0.085 &  0.160$\pm$0.101 &  0.215$\pm$0.112 &  0.270$\pm$0.120 &  0.319$\pm$0.129 &  0.365$\pm$0.136 &  0.407$\pm$0.140 &  0.452$\pm$0.140 &  0.492$\pm$0.142 \\
                    &     & 15\% &  0.059$\pm$0.064 &  0.114$\pm$0.088 &  0.168$\pm$0.104 &  0.227$\pm$0.114 &  0.285$\pm$0.121 &  0.335$\pm$0.129 &  0.382$\pm$0.136 &  0.426$\pm$0.139 &  0.475$\pm$0.137 &  0.515$\pm$0.139 \\
                    & 0.3 & 5\% &  0.145$\pm$0.090 &  0.286$\pm$0.121 &  0.414$\pm$0.138 &  0.521$\pm$0.145 &  0.608$\pm$0.145 &  0.679$\pm$0.139 &  0.732$\pm$0.132 &  0.771$\pm$0.125 &  0.799$\pm$0.118 &  0.821$\pm$0.112 \\
                    &     & 10\% &  0.160$\pm$0.096 &  0.314$\pm$0.124 &  0.451$\pm$0.133 &  0.565$\pm$0.134 &  0.654$\pm$0.131 &  0.727$\pm$0.120 &  0.776$\pm$0.112 &  0.809$\pm$0.106 &  0.832$\pm$0.101 &  0.850$\pm$0.096 \\
                    &     & 15\% &  0.167$\pm$0.099 &  0.331$\pm$0.125 &  0.473$\pm$0.131 &  0.591$\pm$0.129 &  0.682$\pm$0.124 &  0.754$\pm$0.110 &  0.801$\pm$0.103 &  0.830$\pm$0.098 &  0.851$\pm$0.094 &  0.866$\pm$0.090 \\
$\mathcal{N}(0;0.08)$ & 0.1 & 5\% &  0.055$\pm$0.056 &  0.118$\pm$0.074 &  0.173$\pm$0.089 &  0.227$\pm$0.104 &  0.282$\pm$0.114 &  0.334$\pm$0.124 &  0.380$\pm$0.132 &  0.425$\pm$0.140 &  0.468$\pm$0.146 &  0.507$\pm$0.149 \\
                    &     & 10\% &  0.061$\pm$0.060 &  0.131$\pm$0.079 &  0.191$\pm$0.095 &  0.250$\pm$0.109 &  0.309$\pm$0.115 &  0.364$\pm$0.125 &  0.412$\pm$0.132 &  0.460$\pm$0.138 &  0.505$\pm$0.143 &  0.546$\pm$0.144 \\
                    &     & 15\% &  0.064$\pm$0.062 &  0.139$\pm$0.082 &  0.201$\pm$0.098 &  0.264$\pm$0.112 &  0.325$\pm$0.116 &  0.382$\pm$0.126 &  0.432$\pm$0.132 &  0.482$\pm$0.137 &  0.528$\pm$0.142 &  0.570$\pm$0.142 \\
                    & 0.3 & 5\% &  0.172$\pm$0.085 &  0.332$\pm$0.118 &  0.463$\pm$0.138 &  0.571$\pm$0.145 &  0.654$\pm$0.140 &  0.719$\pm$0.131 &  0.765$\pm$0.122 &  0.798$\pm$0.114 &  0.823$\pm$0.106 &  0.842$\pm$0.100 \\
                    &     & 10\% &  0.192$\pm$0.092 &  0.363$\pm$0.120 &  0.502$\pm$0.135 &  0.613$\pm$0.138 &  0.697$\pm$0.128 &  0.760$\pm$0.115 &  0.802$\pm$0.105 &  0.830$\pm$0.098 &  0.850$\pm$0.093 &  0.865$\pm$0.088 \\
                    &     & 15\% &  0.202$\pm$0.095 &  0.381$\pm$0.120 &  0.525$\pm$0.135 &  0.639$\pm$0.135 &  0.724$\pm$0.121 &  0.786$\pm$0.105 &  0.824$\pm$0.095 &  0.849$\pm$0.089 &  0.867$\pm$0.084 &  0.880$\pm$0.081 \\
\bottomrule
\end{tabular}
}
\caption{Results about the normalized $\Delta MoV$ of algorithm \emph{SPpagerank1.0\_pos} under the \emph{limited-knowledge} hypothesis for 20 voters, 5 candidates, for single-peaked and nearly-single-peaked electorates. Each cell shows the average performance and the standard deviation.}
\label{tab:results_wattsstrogatz_LK_STUDYING_STEP_3}
\end{table}

Until now, the shown experiments only involved electorates made up of 20 voters. We now analyze how performances change when testing electorates of 20, 50, and 100 voters.
The algorithms were only tested with a single, medium budget: 10\% of the electorate. Moreover, the target candidate was fixed to the right-most one on the political spectrum.
Figure \ref{fig:results_wattsstrogatz_LK_STUDYING_STEP_8} illustrates the results.
\begin{figure}[ht]
    \centering
    \includegraphics[width=0.75\textwidth]{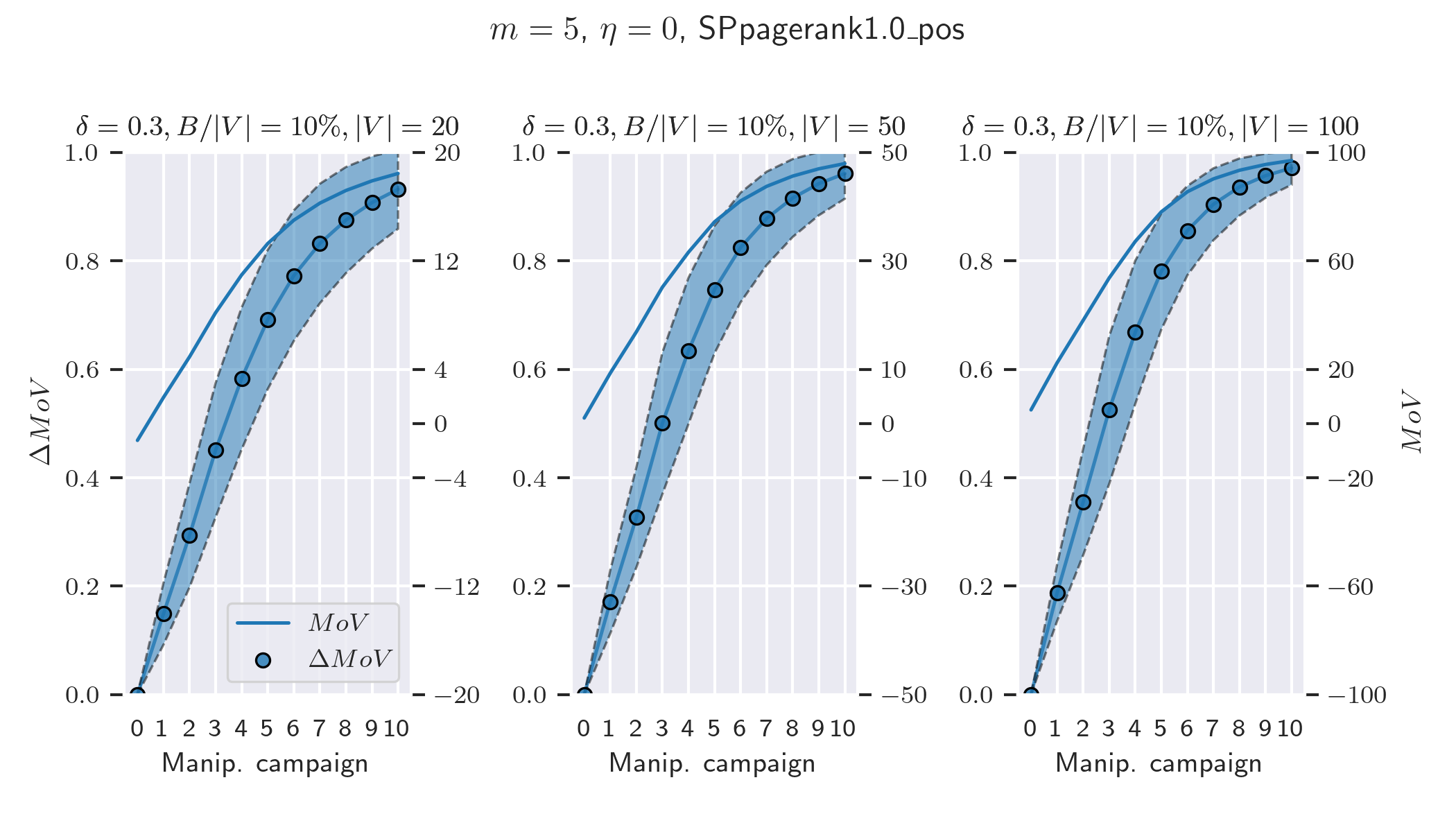}
    \caption{Performances of \emph{SPpagerank1.0\_pos} in perfectly single-peaked scenarios.}
\label{fig:results_wattsstrogatz_LK_STUDYING_STEP_8}
\end{figure}
Note that performances increase when the number of voters increases.

We next show that similar results hold also in the case of noisy single-peaked preferences.
Moreover, related experiments address an important issue arising when introducing noise. As visible in Figure \ref{fig:results_wattsstrogatz_LK_STUDYING_STEP_3}, noisy scenarios are characterized by higher initial MoVs. Trying to understand the reasons for this effect, results show that it partially depends on randomness; however, a simple way to reduce the effect is to change the target candidate. Remember that the previously shown experiments randomly select the target candidate. Instead, the following experiments use the right-most candidate as the target. In this way, the initial MoV does not change too much, and noisy and clear scenarios are more comparable, too. This partially solves the problem highlighted in the results of Figure \ref{fig:results_wattsstrogatz_LK_STUDYING_STEP_3}. Figure \ref{fig:results_wattsstrogatz_LK_STUDYING_STEP_8_noise} displays the tests in this new scenario; it is clear that the initial MoV in noisy electorates is almost identical to the one in single-peaked electorates, with a deviation of only one vote on average in the worst case for 20-voters electorates. The figure also shows the performances on larger graphs. In general, by changing the target candidate as described, performances with and without noise are much more divergent, proving that the manipulator struggles to increase the margin of victory in very noisy settings. This confirms that the election is easier to manipulate in perfectly single-peaked electorates (Table \ref{tab:results_wattsstrogatz_LK_STUDYING_STEP_8_noise} shows numerical average performances and standard deviations for such experiments).
\begin{figure}[H]
    \centering
    \includegraphics[width=1.0\textwidth]{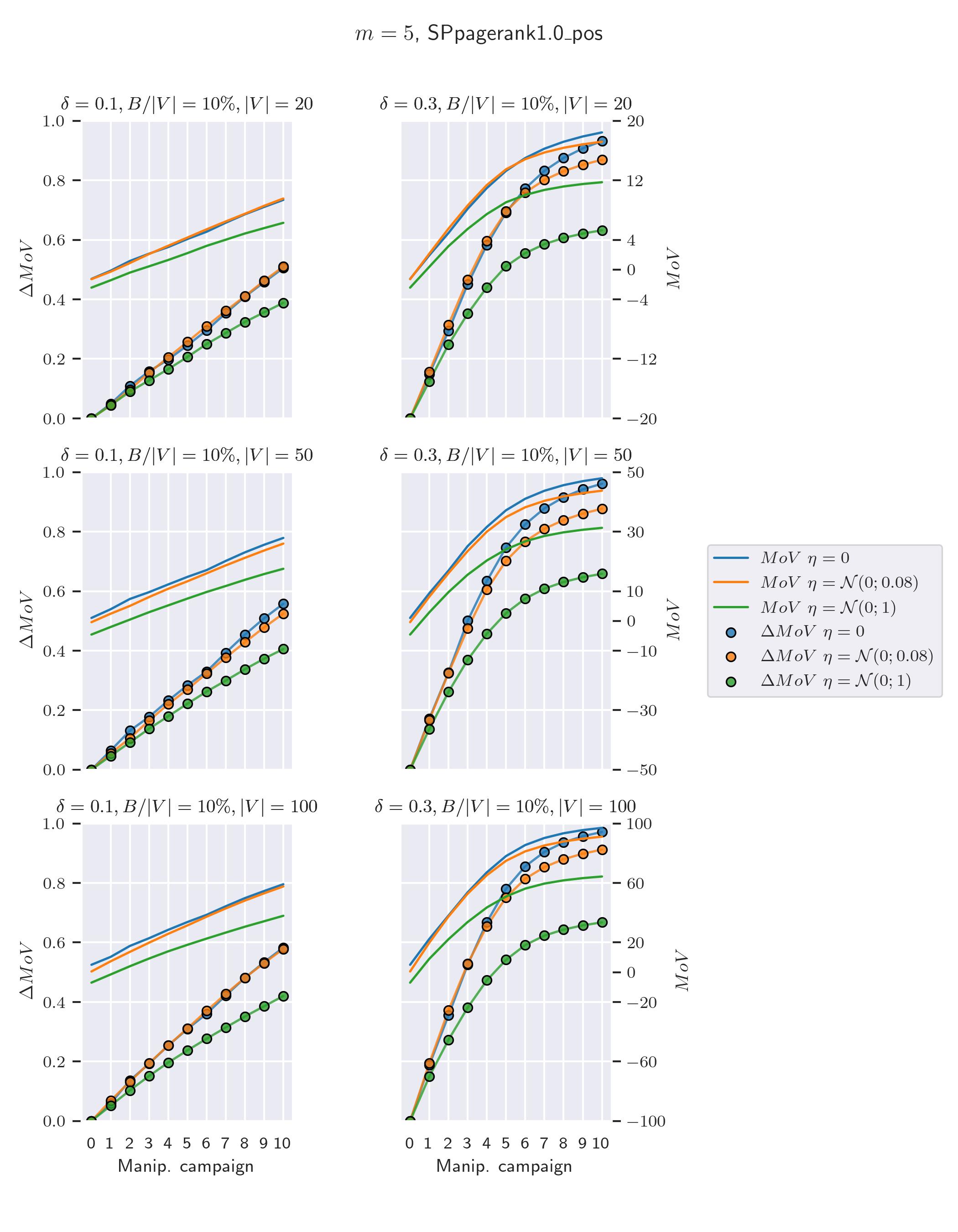}
    \caption{Performances of the algorithms \emph{SPpagerank1.0\_pos} in nearly-single-peaked scenarios. The plots show both the normalized $\Delta MoV$ and the $MoV$. The target candidate of the simulated elections is the right-most one on the political spectrum.}
\label{fig:results_wattsstrogatz_LK_STUDYING_STEP_8_noise}
\end{figure}
\begin{table}[H]
\centering
\resizebox{\textwidth}{!}{
\begin{tabular}{llllllllllllll}
\toprule
    &                    &     &     &       Round 1 &       Round 2 &       Round 3 &       Round 4 &       Round 5 &       Round 6 &       Round 7 &       Round 8 &       Round 9 &      Round 10 \\
$|V|$ & $\eta$ & $\delta$ & B &               &               &               &               &               &               &               &               &               &               \\
\midrule
20  & 0 & 0.1 & 10\% &  0.049$\pm$0.055 &  0.109$\pm$0.062 &  0.158$\pm$0.059 &  0.198$\pm$0.067 &  0.245$\pm$0.094 &  0.296$\pm$0.102 &  0.353$\pm$0.110 &  0.409$\pm$0.118 &  0.458$\pm$0.127 &  0.505$\pm$0.139 \\
    &                    & 0.3 & 10\% &  0.149$\pm$0.057 &  0.294$\pm$0.096 &  0.451$\pm$0.124 &  0.583$\pm$0.131 &  0.692$\pm$0.128 &  0.773$\pm$0.120 &  0.832$\pm$0.110 &  0.875$\pm$0.098 &  0.907$\pm$0.085 &  0.932$\pm$0.073 \\
    & $\mathcal{N}(0;0.08)$ & 0.1 & 10\% &  0.044$\pm$0.069 &  0.097$\pm$0.083 &  0.153$\pm$0.103 &  0.205$\pm$0.104 &  0.257$\pm$0.105 &  0.310$\pm$0.110 &  0.361$\pm$0.114 &  0.411$\pm$0.122 &  0.463$\pm$0.135 &  0.510$\pm$0.138 \\
    &                    & 0.3 & 10\% &  0.156$\pm$0.098 &  0.315$\pm$0.104 &  0.466$\pm$0.127 &  0.597$\pm$0.129 &  0.696$\pm$0.122 &  0.759$\pm$0.114 &  0.801$\pm$0.106 &  0.831$\pm$0.099 &  0.852$\pm$0.093 &  0.869$\pm$0.088 \\
    & $\mathcal{N}(0;1)$ & 0.1 & 10\% &  0.044$\pm$0.056 &  0.090$\pm$0.078 &  0.127$\pm$0.086 &  0.165$\pm$0.090 &  0.206$\pm$0.094 &  0.250$\pm$0.098 &  0.287$\pm$0.103 &  0.324$\pm$0.108 &  0.356$\pm$0.111 &  0.388$\pm$0.116 \\
    &                    & 0.3 & 10\% &  0.125$\pm$0.077 &  0.248$\pm$0.093 &  0.352$\pm$0.106 &  0.441$\pm$0.117 &  0.511$\pm$0.125 &  0.555$\pm$0.127 &  0.586$\pm$0.129 &  0.606$\pm$0.130 &  0.621$\pm$0.130 &  0.632$\pm$0.130 \\
50  & 0 & 0.1 & 10\% &  0.064$\pm$0.042 &  0.131$\pm$0.049 &  0.177$\pm$0.058 &  0.232$\pm$0.071 &  0.283$\pm$0.086 &  0.330$\pm$0.096 &  0.393$\pm$0.103 &  0.454$\pm$0.118 &  0.509$\pm$0.133 &  0.557$\pm$0.145 \\
    &                    & 0.3 & 10\% &  0.171$\pm$0.058 &  0.327$\pm$0.092 &  0.501$\pm$0.131 &  0.634$\pm$0.135 &  0.747$\pm$0.117 &  0.825$\pm$0.101 &  0.878$\pm$0.086 &  0.916$\pm$0.072 &  0.942$\pm$0.058 &  0.961$\pm$0.046 \\
    & $\mathcal{N}(0;0.08)$ & 0.1 & 10\% &  0.055$\pm$0.039 &  0.106$\pm$0.054 &  0.165$\pm$0.063 &  0.220$\pm$0.070 &  0.270$\pm$0.080 &  0.323$\pm$0.087 &  0.377$\pm$0.091 &  0.428$\pm$0.099 &  0.478$\pm$0.107 &  0.525$\pm$0.114 \\
    &                    & 0.3 & 10\% &  0.167$\pm$0.061 &  0.325$\pm$0.085 &  0.474$\pm$0.106 &  0.605$\pm$0.110 &  0.702$\pm$0.103 &  0.767$\pm$0.093 &  0.809$\pm$0.084 &  0.839$\pm$0.077 &  0.861$\pm$0.071 &  0.877$\pm$0.066 \\
    & $\mathcal{N}(0;1)$ & 0.1 & 10\% &  0.046$\pm$0.038 &  0.092$\pm$0.051 &  0.137$\pm$0.059 &  0.179$\pm$0.067 &  0.221$\pm$0.072 &  0.262$\pm$0.076 &  0.299$\pm$0.078 &  0.337$\pm$0.082 &  0.373$\pm$0.085 &  0.405$\pm$0.088 \\
    &                    & 0.3 & 10\% &  0.137$\pm$0.057 &  0.261$\pm$0.073 &  0.369$\pm$0.083 &  0.457$\pm$0.089 &  0.526$\pm$0.094 &  0.575$\pm$0.095 &  0.608$\pm$0.094 &  0.631$\pm$0.094 &  0.647$\pm$0.094 &  0.658$\pm$0.094 \\
100 & 0 & 0.1 & 10\% &  0.059$\pm$0.031 &  0.136$\pm$0.046 &  0.192$\pm$0.053 &  0.253$\pm$0.069 &  0.309$\pm$0.088 &  0.360$\pm$0.101 &  0.422$\pm$0.110 &  0.481$\pm$0.122 &  0.534$\pm$0.138 &  0.583$\pm$0.147 \\
    &                    & 0.3 & 10\% &  0.188$\pm$0.052 &  0.355$\pm$0.098 &  0.525$\pm$0.136 &  0.668$\pm$0.132 &  0.780$\pm$0.107 &  0.856$\pm$0.082 &  0.904$\pm$0.066 &  0.936$\pm$0.053 &  0.957$\pm$0.041 &  0.971$\pm$0.031 \\
    & $\mathcal{N}(0;0.08)$ & 0.1 & 10\% &  0.069$\pm$0.032 &  0.132$\pm$0.040 &  0.194$\pm$0.055 &  0.254$\pm$0.063 &  0.311$\pm$0.074 &  0.370$\pm$0.083 &  0.427$\pm$0.090 &  0.481$\pm$0.098 &  0.531$\pm$0.106 &  0.578$\pm$0.112 \\
    &                    & 0.3 & 10\% &  0.194$\pm$0.055 &  0.372$\pm$0.082 &  0.528$\pm$0.102 &  0.655$\pm$0.106 &  0.751$\pm$0.089 &  0.814$\pm$0.072 &  0.853$\pm$0.061 &  0.880$\pm$0.054 &  0.898$\pm$0.049 &  0.912$\pm$0.045 \\
    & $\mathcal{N}(0;1)$ & 0.1 & 10\% &  0.051$\pm$0.027 &  0.103$\pm$0.034 &  0.151$\pm$0.038 &  0.196$\pm$0.041 &  0.237$\pm$0.047 &  0.276$\pm$0.053 &  0.314$\pm$0.057 &  0.351$\pm$0.063 &  0.386$\pm$0.065 &  0.420$\pm$0.066 \\
    &                    & 0.3 & 10\% &  0.150$\pm$0.034 &  0.273$\pm$0.052 &  0.381$\pm$0.063 &  0.473$\pm$0.068 &  0.542$\pm$0.071 &  0.591$\pm$0.071 &  0.623$\pm$0.071 &  0.643$\pm$0.071 &  0.657$\pm$0.071 &  0.668$\pm$0.071 \\
\bottomrule
\end{tabular}

}
\caption{Results about the normalized $\Delta MoV$ of algorithm \emph{SPpagerank1.0\_pos} under the \emph{limited-knowledge} hypothesis for 20, 50, and 100 voters, 5 candidates, for single-peaked and nearly-single-peaked electorates. Each cell shows the average performance and the standard deviation. The target candidate of the simulated elections is the right-most one on the political spectrum.}
\label{tab:results_wattsstrogatz_LK_STUDYING_STEP_8_noise}
\end{table}
By analyzing the variances of the performances, we can see that standard deviations decrease when the number of voters increases, especially in noisy environments. This means that the variability of the solution tends to decrease compared to the size of the electorate.

\begin{figure}[ht]
    \centering
    \includegraphics[width=0.75\textwidth]{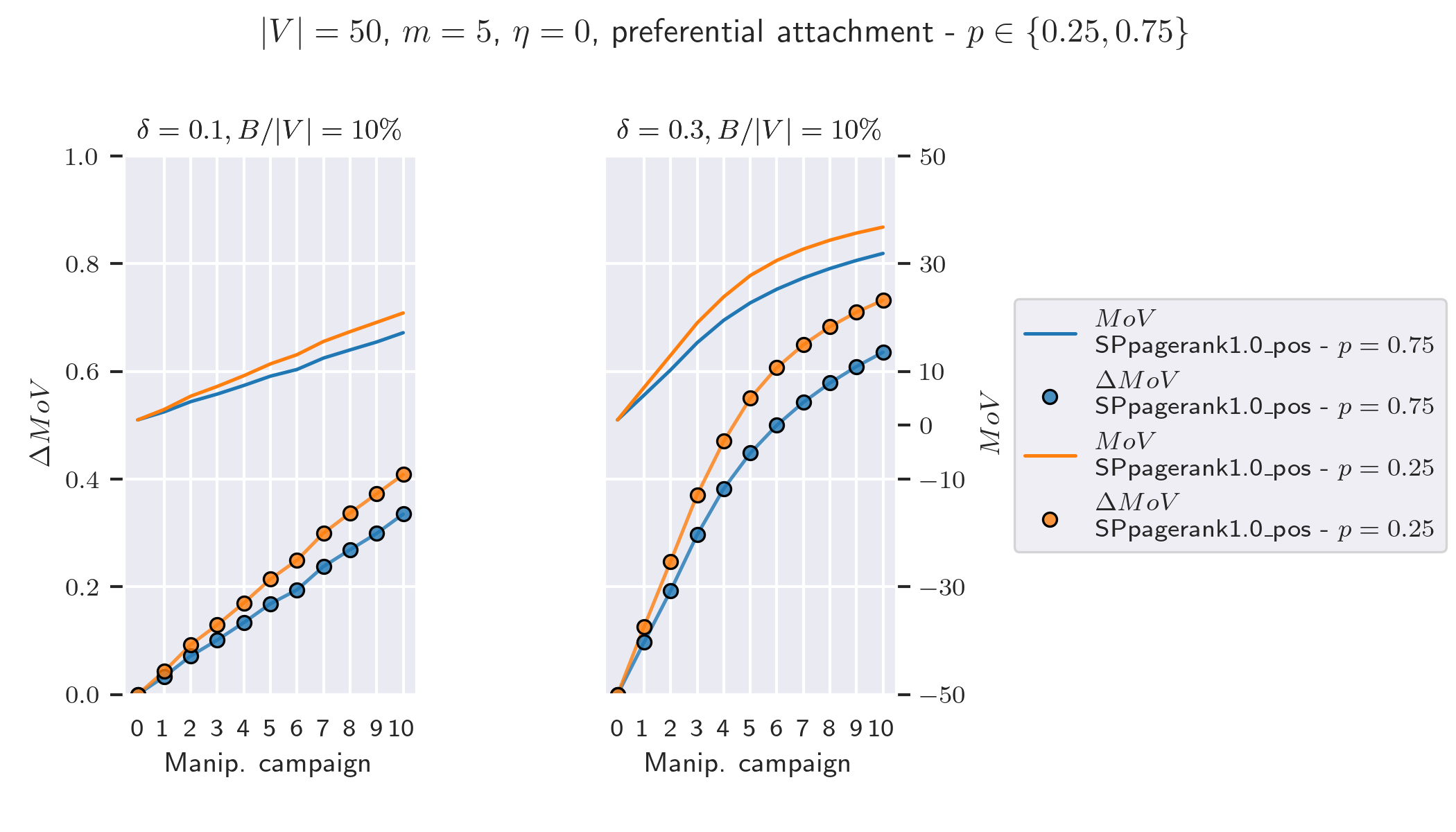}
    \caption{Performances of \emph{SPpagerank1.0\_pos} for $\NOISE=0$ and $p\in\{0.25,0.75\}$.}
\label{fig:results_wattsstrogatz_LK_STUDYING_STEP_10}
\end{figure}
Next we evaluate whether results showed above are robust against different graphs. We first present the experiments that were performed on preferential attachment graphs.
Tests were performed with $\NOISE=0$, $|V| \in \{20,50,100\}$, $\delta \in \{0.1, 0.3\}$.
The target candidate is the right-most one. Since results for different sizes of the electorates were almost identical,
only the ones for $|V|=50$ are displayed.
Figure \ref{fig:results_wattsstrogatz_LK_STUDYING_STEP_10} displays the normalized $\Delta MoV$ and the $MoV$.
Observe that
the manipulator benefits from the rich-get-richer phenomenon.

Finally we show how our heuristics performs on the real Facebook network.
The results of the experiment are shown in Figure \ref{fig:facebook_mov}. Plots only show the margin of victory; $\Delta MoV$ can be plotted by simply shifting the curve up, such that the value before the first manipulation campaign is 0.
\begin{figure}
    \centering
    \includegraphics[width=0.7\textwidth]{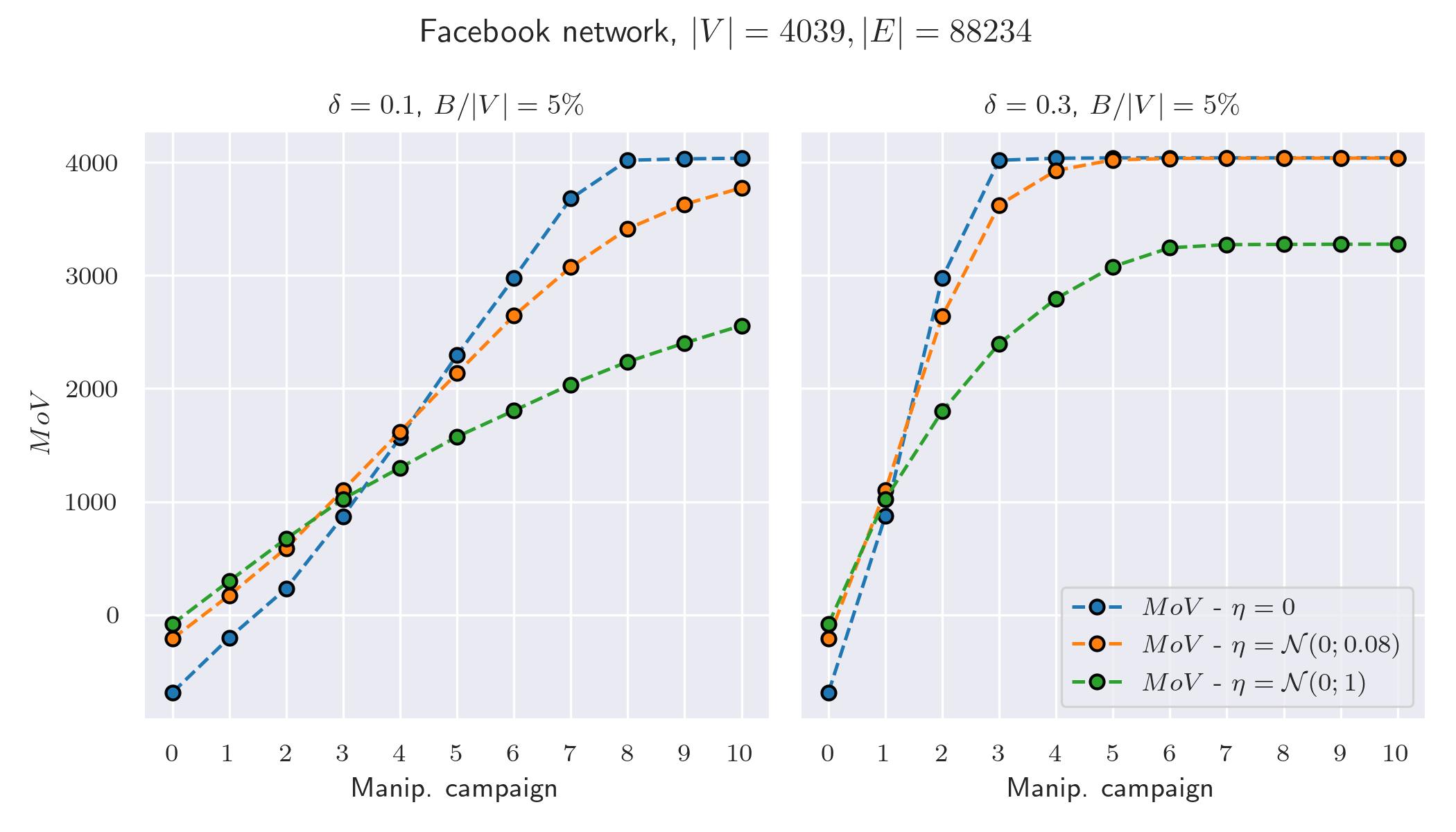}
    \caption{Average margin of victory for the test on the Facebook network.}
\label{fig:facebook_mov}
\end{figure}
Note that candidates are in $\{-1, -0.5, 0, 0.5, 1\}$, and voters are initially placed such that $c_2$ (at position $0$) loses the election; in fact, his margin of victory is negative. Nevertheless, in single-peaked electorates, the algorithm only needs two campaigns (when $\delta=0.1$) or one campaign (when $\delta=0.3$) to make $c_2$ win the election.
Moreover, when voters are easily manipulable, the algorithm reaches unanimity in a few campaigns. Even in nearly-single-peaked electorates, the manipulator can make $c_2$ win, although performances are worse.

Visualizing communities as in Figure \ref{fig:facebook_louvain_communities} allows us to better understand how the algorithm works by analyzing the influencers it chooses for the manipulation.
\begin{figure}[H]
    \centering
    \includegraphics[width=0.8\textwidth]{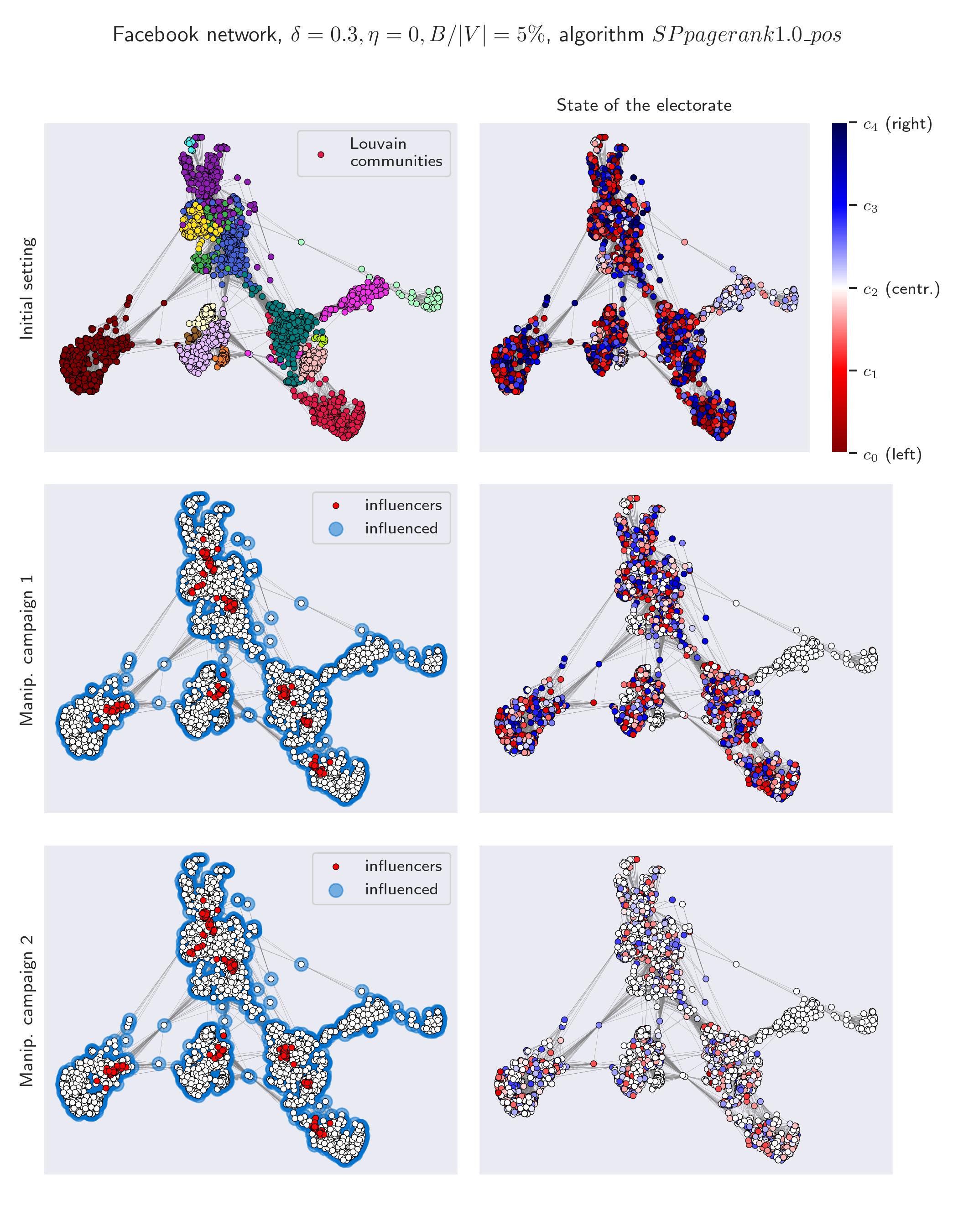}
    \caption{Facebook network. Evolution of the electorate after the first two manipulation campaigns.}
\label{fig:facebook_graph_manip}
\end{figure}
Remember that nodes in the same community are tightly connected, while voters in different communities have lower chances of sharing information with each other. Hence, the algorithm should spend the budget to choose nodes in different communities to reach as many nodes as possible. However, since this is an election manipulation problem and not an influence maximization one, the algorithm searches for focused influence, discarding nodes that already vote for the target $c_2$ as they do not impact the margin of victory. All these considerations are actually internalized by algorithm \emph{SPpagerank1.0\_pos}, as revealed in Figure \ref{fig:facebook_graph_manip}.
The plots show the evolution of the electorate and the influencers chosen by the algorithm under the following conditions: $\NOISE=0$, $\delta=0.3$, $B/|V|=5\%$. Data concerns a single execution, as it would be in a real scenario: no averages were performed. The first row of the figure shows the Louvain communities as a reference and the initial state of the electorate (voters are coloured according to their political position on the spectrum; see the legend in the plot). It is clear that only a small subset of voters supports $c_2$, namely the communities on the right and some communities in the middle of the plot that are not clearly visible. The second row shows the influencers chosen by the algorithm and the state of the electorate after the first manipulation campaign. Interestingly, the influencers (represented as red dots) are distributed all over the communities to reach as many nodes as possible (represented as haloed, blue shadows). However, there is no influencer in the right part of the plot since such communities already supported $c_2$: choosing influencers in these communities would be a pointless waste of budget. The state of the electorate is represented by lighter colours which means that many voters changed their minds to support $c_2$. The same considerations apply to the second manipulation campaign (third row of the plot), after which much more nodes vote for the target candidate.

We finally evaluate the timing performances (numerical results are showed in Table~\ref{tab:results_wattsstrogatz_LK_STUDYING_STEP_9_times}).
\begin{table}[H]
\centering
\resizebox{\textwidth}{!}{
\begin{tabular}{llllllllllll}
\toprule
            & {} &     Round 1 &     Round 2 &     Round 3 &     Round 4 &     Round 5 &     Round 6 &     Round 7 &     Round 8 &     Round 9 &   Round 10 \\
Alg & {} &             &             &             &             &             &             &             &             &             &            \\
\midrule
$PR1^+$ & {} &   0.00242 s &   0.00235 s &   0.00228 s &   0.00221 s &   0.00214 s &   0.00207 s &   0.00202 s &   0.00197 s &   0.00194 s &  0.00192 s \\
            & {} &  $\pm$8.52e-05 &  $\pm$8.35e-05 &  $\pm$8.26e-05 &  $\pm$8.15e-05 &  $\pm$7.81e-05 &  $\pm$7.78e-05 &  $\pm$7.81e-05 &  $\pm$7.28e-05 &  $\pm$6.69e-05 &    $\pm$6e-05 \\ \hline
$APX$ & {} &      7.51 s &      4.09 s &       4.6 s &      3.65 s &      2.79 s &      1.81 s &      1.26 s &      0.89 s &     0.759 s &    0.693 s \\
            & {} &      $\pm$5.45 &      $\pm$2.37 &      $\pm$3.02 &       $\pm$2.5 &      $\pm$1.51 &     $\pm$0.886 &     $\pm$0.732 &     $\pm$0.672 &     $\pm$0.694 &    $\pm$0.715 \\
\bottomrule
\end{tabular}
}
\caption{Simulation times of the fast heuristics and the approximation algorithms for 20 voters, $\NOISE=0$, 5 candidates. Each cell shows the average execution time and the standard deviation in seconds. $PR1^+$ and $APX$ respectively stand for \emph{SPpagerank1.0\_pos} and \emph{Approximation algorithm}.}
\label{tab:results_wattsstrogatz_LK_STUDYING_STEP_9_times}
\end{table}
%
%
Execution times show that the approximation algorithm is thousands of times slower than the fast heuristic.

Finally, we tested the scalability of \emph{SPpagerank1.0\_pos} on networks up to 20000 nodes as described above
(the numerical results of this experiment are shown in Table \ref{tab:stress_test}).
\begin{table}[H]
\centering
\begin{tabular}{l|rrrrrrr}
\toprule
{$\NOISE$} &  $|V|=200$ &  $500$ &  $1000$ &  $2000$ &  $5000$ &  $10000$ &  $20000$ \\
\midrule
0                   &      14743 &       7558 &        3842 &        1613 &         210 &           59 &           15 \\
$\mathcal{N}(0;0.08)$ &      13204 &       6848 &        3495 &        1490 &         193 &           56 &           14 \\
$\mathcal{N}(0;1.0)$  &      12693 &       6699 &        3443 &        1473 &         190 &           55 &           15 \\
\bottomrule
\end{tabular}
\caption{Number of simulations of algorithm \emph{SPpagerank1.0\_pos} running on electorates of increasing sizes for three hours.}
\label{tab:stress_test}
\end{table}
Interestingly, on a common PC, the algorithm can be executed 15 times in three hours on graphs of 20000 nodes.
Since simulations require additional code to prepare the electoral setting and include 10 manipulation campaigns, the number of tests runnable in three hours is even higher. This means that a manipulator would not face any problem executing the proposed algorithm on large graphs.

The plot in Figure \ref{fig:results_subtimes} analyzes the same results of Table \ref{tab:stress_test} from a different perspective: it shows the average values of the total execution times, the execution times to compute the scores $z(v)$ of the nodes $v$, and the execution times to compute the weighted PageRank based on such scores. The plot considers only the first manipulation campaign since related execution times are surely not altered, unless unanimity in favour of the target candidate is reached (and thus the simulation is stopped).
\begin{figure}[H]
    \centering
    \includegraphics[width=1.0\textwidth]{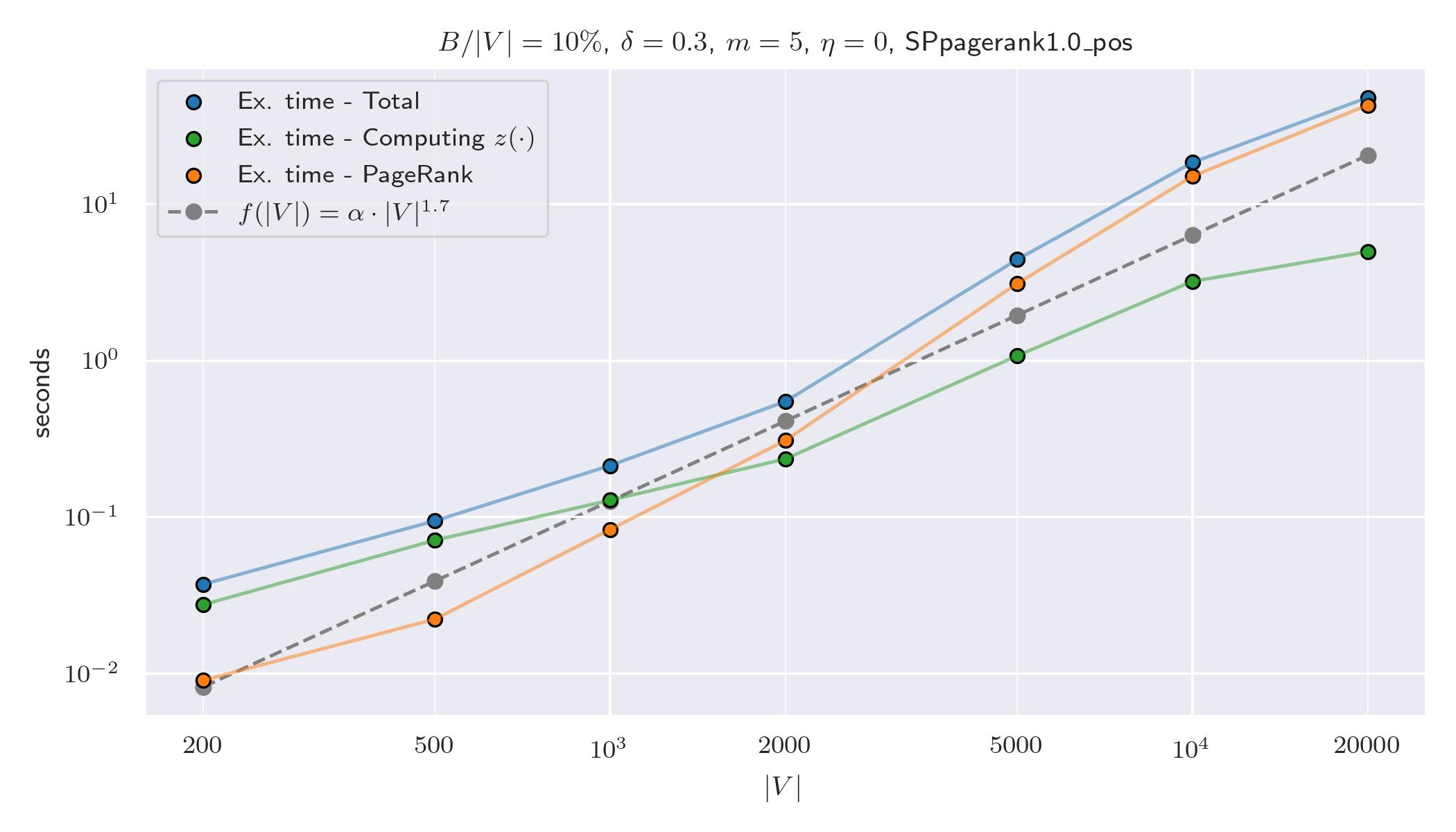}
    \caption{Execution times of the algorithms \emph{SPpagerank1.0\_pos} in perfectly-single-peaked scenarios. The plot shows the average execution times to run the algorithm, compute the weights $z(\cdot)$, and compute PageRank values. The dashed line confirms that execution times grow according to a power law.}
\label{fig:results_subtimes}
\end{figure}
We can see that the cost of the algorithm is basically the one of PageRank, as the cost to compute $z(\cdot)$ is negligible. We know that search engines use it constantly on graphs of millions of nodes. Therefore, the algorithm is surely runnable on real problem instances involving millions of voters.
The plot intentionally uses logarithmic scales for both the x-axis and the y-axis. In fact, it is clear that the execution times of the PageRank algorithm approximately follow a straight line proving that execution times grow according to a power law.

\section{Conclusion}
In this work we considered the problem of election manipulation through social influence when agents have single-peaked or nearly single-peaked preferences.
For this purpose, we first propose a new manipulation model that intrinsically generates single-peaked preferences of the voters.
We provided an algorithm with constant approximation guarantees whenever there is no agent that is more advantaged than the target candidate by a campaign in favour of the latter. We also provided an heuristics that has been proved to perform very well in simulations and to be computable very efficiently.
These results highlight the huge risk of election manipulation in the single-peaked setting.

It would be desirable to further extend and deepen our analysis. Moreover, it would be interesting to design efficient and effective counter-measures against manipulation. Our analysis, by highlighting those aspects that simplify or complicate the manipulation, may be an useful starting point in this direction.
%
\pagebreak


\bibliographystyle{splncs04}
\bibliography{carbone}

\end{document}